\documentclass[a4paper,UKenglish,cleveref,thm-restate]{lipics-v2021}
\usepackage[utf8]{inputenc}
\usepackage{microtype}
\usepackage{amsfonts,amsmath,amssymb,amsthm}
\usepackage{tikz}
\usepackage{xparse}
\usepackage{mathtools}
\usepackage{xspace}
\usepackage{hyperref}
\usepackage{babel}
\usepackage{comment}
\usepackage{cancel}
\usepackage{nicefrac}
\usepackage{thmtools}
\usepackage{thm-restate}

\usepackage[noend]{algpseudocode}
\usepackage{mdframed}
\usepackage{algorithm}
\usepackage{bm}
\usepackage{booktabs}
\usepackage{threeparttable}
\usepackage{makecell} \newtheorem{hypothesis}{Hypothesis}

\newcommand{\automataClass}[1]{\textrm{#1}\xspace}
\newcommand{\NFA}{\automataClass{NFA}}
\newcommand{\XNFA}{\automataClass{XNFA}}

\newcommand{\DFA}{\automataClass{DFA}}

\newcommand{\DFAs}{\automataClass{DFAs}}

 \newcommand{\N}{\bN}

\NewDocumentCommand{\function}{ m e{^_} O{a} m }{{#1}\IfValueT{#3}{_{#3}}\IfValueT{#2}{^{#2}}\if\relax\detokenize{#5}\relax
\else
\ifx#4a\bracketInner*{#5}\else
        \bracketInner[#4]{#5}\fi
    \fi
}
\newcommand{\functionT}[1]{\function{\mathrm{#1}}}
\newcommand{\functionM}{\function}

\newcommand{\GF}[1]{\functionM{\mathbb{F}}_{#1}{}}

\renewcommand{\O}{\functionT{O}}

\renewcommand{\v}[1]{\vec{#1}}

\ExplSyntaxOn
\seq_set_from_clist:Nn \l_letter_seq {a,b,c,d,e,f,g,h,i,j,k,l,m,n,o,p,q,r,s,t,u,v,w,x,y,z}
\seq_set_from_clist:Nn \l_capital_letter_seq {A,B,C,D,E,F,G,H,I,J,K,L,M,N,O,P,Q,R,S,T,U,V,W,X,Y,Z}
\seq_map_inline:Nn \l_letter_seq {
\str_if_eq:nnTF { #1 } { d }
    {
}
    {
      \cs_new:cpn { v#1 } { \v{#1} }
    }
    \cs_new:cpn { tt#1 } { \mathtt{#1} }
}
\seq_map_inline:Nn \l_capital_letter_seq {
    \cs_new:cpn { c#1 } { \mathcal{#1} }
    \cs_new:cpn { b#1 } { \mathbb{#1} }
    \cs_new:cpn { tt#1 } { \mathtt{#1} }
}
\ExplSyntaxOff

\newcommand{\suchthat}{\suchthatSymbol\PackageWarning{Radeks Macro}{Command suchthat used outside of matching PairedDelimiter was used on input line \the\inputlineno.}}
\newcommand\suchthatSymbol[1][]{\nonscript\:#1\vert\allowbreak\nonscript\:\mathopen{}}
\DeclarePairedDelimiterX{\setInner}[1]\{\}{\renewcommand\suchthat{\suchthatSymbol[\delimsize]}#1}
\NewDocumentCommand{\set}{ O{a} m }{\ifx#1a\setInner*{#2}\else\ifx#1b{\{#2\}}\else\setInner[#1]{#2}\fi\fi}
\NewDocumentCommand{\bracket}{ O{a} m }{\ifx#1a\bracketInner*{#2}\else\bracketInner[#1]{#2}\fi}
\NewDocumentCommand{\squareBracket}{ O{a} m }{\ifx#1a\squareBracketInner*{#2}\else\squareBracketInner[#1]{#2}\fi}

\DeclarePairedDelimiter {\size}          \lvert\rvert
\DeclarePairedDelimiter {\squareBracketInner} []

\DeclarePairedDelimiter {\bracketInner}  ()

\newcommand{\isomorphism}{\mathrm{toVec}}

\author{Dmitry Chistikov}{University of Warwick}{}{}{Supported in part by the Engineering and Physical Sciences Research Council [EP/X03027X/1].}
\author{Radosław Pi{\'{o}}rkowski}{University of Warwick}{}{}{Supported by the Engineering and Physical Sciences Research Council [EP/X03027X/1].}
\author{Neha Rino}{University of Warwick}{}{}{Supported by the Feuer International Scholarship in Artificial Intelligence by University of Warwick alumnus Jonathan Feuer.}
\author{Brink {van der Merwe}}{Stellenbosch University}{}{}{}

\authorrunning{D. Chistikov, R. Pi{\'{o}}rkowski, N. Rino, B. van der Merwe} 

\Copyright{Dmitry Chistikov, Radoslaw Pi{\'{o}}rkowski, Neha Rino, Brink van der Merwe}

\ccsdesc[500]{Theory of computation~Quantitative automata}
\keywords{Automata theory, Fine-grained Complexity, Symmetric Difference Automata} 

\category{} 

\relatedversion{}

\funding{This research was supported by the Royal Society
(IEC\textbackslash{}R2\textbackslash{}242349).
We would also like to thank
the Centre for Discrete Mathematics and its Applications (DIMAP) and
the Department of Computer Science, at the University of Warwick.}

\EventEditors{John Q. Open and Joan R. Access}
\EventNoEds{2}
\EventLongTitle{42nd Conference on Very Important Topics (CVIT 2016)}
\EventShortTitle{CVIT 2016}
\EventAcronym{CVIT}
\EventYear{2016}
\EventDate{December 24--27, 2016}
\EventLocation{Little Whinging, United Kingdom}
\EventLogo{}
\SeriesVolume{42}
\ArticleNo{23}
 \acknowledgements{For Corollary 11 we are indebted to Stefan Kiefer and
Andrew Ryzhikov, who brought the open problem from their
work~\cite{kiefer_et_al:LIPIcs.STACS.2025.61} to our attention and discussed the application of our
Theorem~7 with us.
We are also grateful to Michael Wehar for inspiring discussions
and to Sarah~J.~Selkirk for making this project possible.
}
\nolinenumbers
\hideLIPIcs
\title{Algorithms and fine-grained complexity for nondeterministic and symmetric difference automata}
\titlerunning{Algorithms and fine-grained complexity for NFA and XNFA}
\begin{document}

\maketitle
\begin{abstract}
Symmetric difference automata (XNFA) are a variant of standard finite automata in which an input word is accepted iff the number of accepting runs is odd. Equivalently, these are weighted automata over the two-element field. We study the fine-grained complexity of the basic decision problems for XNFA: acceptance, emptiness, and equivalence, aiming to optimise the degree of the polynomial in their running-time bounds.

Under the assumption of polynomial ambiguity,
we provide a randomised reduction of NFA acceptance to XNFA acceptance.
For automata of bounded ambiguity (e.g., unambiguous automata),
we show that acceptance for both NFA and XNFA
can be decided faster than in the general case.
Without ambiguity assumptions, we give faster
algorithms for the verification of suitable certificates for
(non)emptiness and (non)equivalence of XNFA. Several of our results extend to weighted automata over other semirings and fields. 

\end{abstract}
 
\newpage
\section{Introduction}
\label{sec:introduction}
Connecting and comparing algorithmic problems over different algebraic structures is a well-known challenge. Consider the complexity classes $\mathrm{NL}$ and $\oplus\mathrm{L}$  corresponding to language recognition by logarithmic-space Turing machines. $\mathrm{NL}$ machines accept a word if there is a positive number of accepting runs while $\oplus\mathrm{L}$ machines accept if there is an odd number of accepting runs.  
It is still not known whether the inclusion $\mathrm{NL} \subseteq \oplus \mathrm{L}$ (or its converse) holds;
see~\cite{Damm90,DBLP:journals/iandc/ImmermanL95}.
A complete problem for NL is the acceptance problem for nondeterministic finite automata, that is, given an NFA $A$ and a word $w$, decide whether $A$ accepts $w$. Similarly, a $\oplus$L-complete problem is the acceptance problem for weighted automata over the two-element field~\cite{Sch61}. 

Weighted automata over the two-element field are also known in the literature as \emph{symmetric difference automata} ($\oplus$-NFA)~\cite{vdMerwe12}, or XOR-nondeterministic finite automata (XNFA)~\cite{Vuillemin09}, or Mod $2$ Multiplicity Automata (M2MA, for regular languages over infinite words)~\cite{AngluinAFG22}.
In this paper we use the term `XNFA'. 
A logspace reduction from NFA Acceptance to XNFA Acceptance would show that $\mathrm{NL} \subseteq \oplus \mathrm{L}$. In a similar spirit, it is not known how the time complexities of these two problems compare (as we discuss below in sections~\ref{accprob} and~\ref{relwork}). 
\begin{mdframed}
    In this work, we study the possibility of fine-grained reductions between NFA and XNFA Acceptance, focusing on the time complexity. 

\end{mdframed}

XNFA are a variant of standard finite automata in which an
input word is accepted if and only if the number of accepting runs is odd~\cite{Zijl04a,ZijlG13}. 
Like nondeterministic and deterministic finite automata, XNFA recognise exactly the class of regular languages. 
Much like NFA, XNFA are exponentially more succinct than deterministic automata~\cite{Zijl04a};
they are incomparable with standard NFA
in terms of succinctness.
As weighted automata over a field, XNFA also enjoy properties stemming from the algebraic structure.
Their complementation requires little effort,
and the number of states
can be minimised in cubic time~\cite{Sch61,DBLP:journals/ita/CardonC80};
whilst minimising NFA is PSPACE-hard~\cite{JiangR93}.
Angluin-style learning using only polynomially many queries in the size of the minimal automaton is possible for XNFA, while this is not the case for NFA~\cite{KaznatcheevP21}.

Our work is part of a quest to understand the impact of the underlying algebraic
structure on the computational complexity of algorithmic problems.
The two-element field and the Boolean semiring are the simplest examples
of a field and an idempotent semiring, respectively.
How does this structure influence algorithmic properties of weighted automata?

\begin{mdframed}

In this paper, we study and compare basic decision problems for weighted automata over the Boolean semiring and the two-element field (NFA and XNFA, respectively):

\begin{description}
\item[(acceptance)] whether a given word is accepted by a given automaton,
\item[(non-emptiness)] whether there exists a word that a given automaton accepts, and
\item[(equivalence)] whether two automata accept exactly the same words.
\end{description}
\end{mdframed}
We take the `fine-grained lens': while all these problems (except NFA equivalence)
can be solved efficiently, we are interested in optimising the degrees in polynomial time
bounds, i.e., $k$ in $O(|A|^k)$ where $|A|$ is the size of description
of the automaton~$A$. 
Fine-grained complexity aims to match upper bounds on the running time of algorithms
with lower bounds, conditional on a small number of common complexity assumptions.

\subsection{The acceptance problem}\label{accprob}

Acceptance is perhaps the most basic problem that one can pose
about automata.
If $A$ is an NFA, the textbook algorithm for it determines, for each prefix
of the input word~$w$, which states can be reached on that prefix.
The running time is $O(|A| \cdot |w|)$, where
$|A|$ is the sum of the numbers of states and transitions of $A$
and $|w|$ is the length of~$w$.
Essentially the same algorithm can be used for XNFA acceptance as well.

\vspace{-2.5ex}
\subparagraph*{Is the running time optimal?}

Interestingly, it is unknown
whether substantially faster algorithms may be designed for NFA and XNFA Acceptance.
Bringmann et~al.~\cite{Bringmann24} have recently hypothesised that,
for NFA,
there is no algorithm with running time $(|A| \cdot |w|)^{1-\varepsilon}$
for any real $\varepsilon > 0$.
This \emph{NFA acceptance hypothesis} does not appear to be a consequence
of standard complexity assumptions, fine-grained or not~\cite{Bringmann24}.
The NFA acceptance hypothesis really concerns \emph{dense NFA}; i.e., those
with many more transitions than states.
Indeed, for sparse NFA (having $n$ states and $O(n)$ transitions)
a faster than $(n \cdot |w|)^{1-\varepsilon}$-time
algorithm would
contradict the well-established Orthogonal Vectors Hypothesis and
thus also the Strong Exponential Time Hypothesis (SETH)
on the complexity of Boolean satisfiability, SAT (see, e.g.,~\cite{WilliamsFG2018}).
This conditional lower bound is matched by the textbook algorithm.

The state of the art raises a question:
is XNFA Acceptance similarly hard?

\vspace{-2.5ex}
\subparagraph*{Our approach: Analysis by ambiguity.}
We consider the acceptance problems under assumptions
on the \emph{ambiguity} of automata, i.e., possible number of runs
relative to the length of the input word.
Classes of unambiguous, finitely-ambiguous, and polynomially ambiguous
weighted automata are well-known and can be characterised in terms
of structure of the transition graph~\cite{WeberSeidl91}.
The `ambiguity class' can be decided in polynomial time~\cite{WeberSeidl91},
and fine-grained complexity bounds have been obtained recently~\cite{DrabikDFMW25}.
Analysis by ambiguity is an established tool in the
research on weighted automata, used to obtain (un)decidability and complexity
results~\cite{Daviaud20,DaviaudJLMPW21,FijalkowRW22}
as well as language-theoretic properties~\cite{ChattopadhyayMMR21}.

\vspace{-2.5ex}
\subparagraph*{Our contributions.}
For acceptance problems, we show two results.

\begin{quote}
\begin{enumerate}
\item
The value problem
for weighted automata of bounded ambiguity
can be decided faster than in time $\Theta(|\delta| \cdot |w|)$, where $\delta$ is the set of transitions.
\end{enumerate}
\end{quote}
More precisely, if
for every word $A$ has at most $k$ accepting runs on it,
the standard algorithm for the value problem has a running time of $O(k |Q| \cdot |w| + |\delta|)$,
assuming unit-cost arithmetic in the underlying algebraic structure.
An important special case is $k = O(1)$, referred to as \emph{finite ambiguity}.
While we phrase this result more generally in terms of the value problem of weighted automata, it naturally specialises to both NFA and XNFA. 
We highlight the  special case of \emph{unambiguous finite automata}
(see, e.g.,~\cite{Colcombet15}),
in which there is at most one accepting run for every $w \in \Sigma^*$.
Every unambiguous automaton can be seen both as an NFA and as an XNFA;
and every language recognised by an XNFA or NFA has an unambiguous automaton too
(albeit possibly an exponentially bigger one, by determinisation).
For $A$ unambiguous, our running time of $O(|Q| \cdot |w| + |\delta|)$  solves an open problem posed
by Kiefer and Ryzhikov~\cite{kiefer_et_al:LIPIcs.STACS.2025.61}
on the complexity of multiplying matrices from zero-one matrix monoids. 

\begin{quote}
\begin{enumerate}
\setcounter{enumi}{1}
\item
There is a randomised reduction from NFA acceptance
to XNFA acceptance, under the assumption of \emph{polynomial ambiguity}.
\end{enumerate}
\end{quote}
Let $A$ be an NFA for which there exists a polynomial
$p \colon \mathbb N \to \mathbb N$ such that every word $w \in \Sigma^*$
has at most $p(|w|)$ accepting runs. 
We give a randomised algorithm 
that takes such an NFA $A$ and a word~$w$ and
produces XNFA $A_1, \ldots, A_k$, $k = \Theta(\log |w|)$, such that, with
probability at least $1/2$, $A$~accepts $w$ iff at least one of $A_i$ accepts
$w$. Moreover, the reduction has only one-sided error: if $A$ rejects $w$, then all of $A_i$ reject $w$ as well.
The reduction 
is fast enough to yield the following conclusion: if there is
an algorithm for XNFA acceptance
running in time $O((|\delta| \cdot |w|)^{1-\varepsilon})$
for some $\varepsilon>0$, then there is a randomized algorithm for
polynomially ambiguous NFA acceptance running in time
$O(|\delta| + |w| + (|\delta| \cdot |w|)^{1-\varepsilon})$,
up to polylogarithmic factors.

As we discuss below, prior work suggests that NFA Acceptance may already be hard for polynomially ambiguous automata. By our reduction, XNFA Acceptance cannot be easier.
The reduction raises the question of whether parity acceptance can circumvent existing complexity obstacles: can we build on this reduction
to refute the NFA Acceptance hypothesis?
We leave open the existence of faster algorithms for unrestricted (X)NFA acceptance.

\subsection{The emptiness and equivalence problems}\label{empequivprob}

The equivalence problem, that is, whether two automata
recognise the same language, is a natural algorithmic problem.
Its computational complexity varies depending on the model of computation.
For NFA it is PSPACE-complete~\cite{MeyerS72} but for XNFA,
thanks to the ring structure on $\{0,1\}$,
it is easily reduced to the emptiness problem:
whether a single given automaton accepts at least one word.
Emptiness can be decided efficiently, in cubic time~\cite{Sch61,Vuillemin09}.
More generally,
emptiness and equivalence for weighted automata over fields
can be decided in polynomial time~\cite{Sch61,Tzeng92,Kiefer20}. Over the field of rational numbers, equivalence is in RNC and NC~\cite{Tzeng96,KieferMOWW13}, and is in fact C$_{=}$L-complete~\cite{DBLP:conf/stacs/CernyS26}. 
Several sources describe algorithms that decide emptiness using
$O(|\Sigma| \, |Q|^3)$ field operations and thus equivalence with
$O(|\Sigma| \, (|Q_1| + |Q_2|)^3)$ operations; see, e.g.,~\cite{Kiefer20}. 

\vspace{-2.5ex}
\subparagraph*{Is the cubic dependency of the running time on $\bm{|Q|}$ optimal?}
It is open even for $|\Sigma| = O(1)$ whether, for XNFA, the dependency on $|Q|$
can be made subcubic.
With randomness, running time $O(|Q| \cdot |\delta|)$ can be achieved~\cite{KieferMOWW13}, which is subcubic for sparse but not dense automata.
We have not been able to find a lower bound for this problem.
We ask, however, what are suitable sources of hardness for a hypothetical
lower bound.

\vspace{-2.5ex}
\subparagraph*{Our contribution.}
We show the following result.
\begin{quote}
\begin{enumerate}
\setcounter{enumi}{2}
\item
Reductions from the Strong Exponential Time Hypothesis (SETH)
are unlikely to prove the optimality of cubic algorithms for
XNFA emptiness and equivalence.
\end{enumerate}
\end{quote}
Reductions from SETH
are a standard tool for conditional lower bounds~\cite{WilliamsFG2018}.
We show, more precisely, that a reduction with the required properties
would contradict
NSETH, the nondeterministic version of SETH~\cite{Carmosino}.
A refutation of NSETH would give a better-than-$2^n$
proof system for propositional tautologies as well as
establish new circuit lower bounds~\cite{Carmosino}. In fact, we prove a slightly more general statement, namely that reductions from SETH are unlikely to prove the optimality of current cubic algorithms for zero-ness and equivalence for weighted automata over fields. 

To obtain this result, we describe \textbf{efficient certification
schemes} for XNFA (and weighted automata) emptiness and non-emptiness.
Informally, this means that a knowledgeable \emph{prover}
can come up with a suitable certificate (or `proof') that
the language of XNFA is empty or non-empty.
(For NFA non-emptiness, for example, a certificate can be
an accepting run; but this does not work for XNFA.)
Crucially, the \emph{verification} of a certificate
may be more efficient than solving the problem from scratch.
We describe how to certify both emptiness and non-emptiness
of XNFA so that
a suitable verifier can check the certificates in time
$O(|\Sigma| \, |Q|^\omega)$, where $\omega \leq 2.38$ is the matrix multiplication exponent.
This shows XNFA emptiness solvable in \emph{nondeterministic}
time $O(|\Sigma| \, |Q|^\omega)$ as well as \emph{co-nondeterministic}
time $O(|\Sigma| \, |Q|^\omega)$.
By known results~\cite{Carmosino}, it follows that
any reduction from SETH for a lower bound stronger than
$\Omega(|\Sigma| \, |Q|^\omega)$ would contradict NSETH.

For the XNFA equivalence problem, our results
extend directly from the emptiness problem.
As mentioned above,
the certification schemes also apply to the standard
(non)emptiness and (non)equivalence
problems for
weighted automata over other fields.

\section{Related work and discussion}\label{relwork}

Fine-grained reductions showing conditional lower bounds for
automata-theoretic problems have been known for over two
decades~\cite{KarakostasLV03}.
Several years ago Potechin and Shallit~\cite{PS} showed that,
for every real $\alpha \ge 2$, if there is an algorithm
that decides in time $O(n^\alpha)$ whether an NFA with $4 n$ states accepts
a word of length $n+2$, then there is also an algorithm with the same running
time that decides if an $n$-vertex undirected graph contains a triangle.
For the latter problem, called triangle detection, the best algorithms
run in essentially the same time as best matrix multiplication algorithms,
$O(n^\omega)$.

The appearance of the matrix multiplication exponent $\omega$ is
not coincidental.
As observed by Bringmann et~al.~\cite{Bringmann24}, other existing reductions
in the literature show that any (hypothetical) faster
algorithms for NFA Acceptance would need to employ fast matrix multiplication.
In a nutshell, this is established by a reduction from the $k$-Clique
problem; for brevity we refer to it as ``Bringmann et al.'s reduction''.
Bringmann et al.'s reduction, as well as Potechin and Shallit's reduction,
can be seen as evidence for
the above-mentioned NFA acceptance hypothesis.
The hypothesis itself, in effect, postulates that
fast matrix multiplication techniques cannot offer substantial
running time improvements for NFA Acceptance.

\begin{observation}
\label{obs:triangle-to-xnfa}
\hspace*{-0.2em}There is a randomised reduction from the triangle detection
problem to XNFA Acceptance.
\end{observation}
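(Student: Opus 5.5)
We would follow the Potechin--Shallit construction and replace counting by parity, using random weights to preserve non-zero counts with good probability. Given an undirected graph $G=(V,E)$ with $n$ vertices, build the standard layered NFA $A_G$ on a word of length roughly $n+O(1)$. The run structure is: pick a start vertex $a$, move along an edge to $b$, along an edge to $c$, and then check that $c$ is adjacent to $a$. The accepting runs of $A_G$ on the chosen word then correspond bijectively, or with a fixed constant multiplicity, to ordered triples $(a,b,c)$ forming a triangle.

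Concretely, the word spells the vertex names one at a time; the automaton ``remembers'' $a$ by waiting until the letter for $a$ appears, as in Potechin and Shallit. We would fix such an encoding with $O(n)$ states and $O(n^2)$ transitions so that the number of accepting runs on $w$ equals $t\cdot c_0$, where $t$ is the number of ordered triangles and $c_0$ is a known constant multiplicity.

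Read as an XNFA, this automaton computes $t\cdot c_0 \bmod 2$. Since every triangle contributes $6$ ordered triples, this parity is always $0$, so we cannot use the count directly. Instead we randomise. Sample a uniformly random subset $S\subseteq E$ of edges, or equivalently independent random bits $x_e$. Then keep the transition for edge $e$ in the $a\to b$ step only if $e\in S$, and similarly with independent random subsets for the $b\to c$ and $c\to a$ steps. Let $s_1,s_2,s_3$ be the three independent random edge subsets. The number of accepting runs becomes
\[
\sum_{(a,b,c)} x^{(1)}_{ab}\,x^{(2)}_{bc}\,x^{(3)}_{ca},
\]
summed over triangles. This is a multilinear polynomial over $\GF{2}$ in the variables $x^{(i)}_e$.

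The key point is that distinct ordered triangles give distinct monomials. The three positions use separate variable families, and the ordered triple is recoverable from its monomial. Hence the polynomial is non-zero whenever $G$ has a triangle, and it is identically zero otherwise.

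By the Schwartz--Zippel lemma over $\GF{2}$, equivalently the fact that a non-zero multilinear polynomial of degree $d$ is non-zero on at least a $2^{-d}$ fraction of inputs, a random assignment makes a degree-$3$ non-zero polynomial evaluate to $1$ with probability at least $1/8$. Repeating $O(1)$ times with independent randomness gives success probability at least $1/2$. The error is one-sided: if there is no triangle, every XNFA rejects.

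The construction takes $O(n^2)$ time per instance, with word length $O(n)$. So an $O(n^\alpha)$ XNFA acceptance algorithm on these instances yields a randomised $O(n^\alpha)$ triangle detection algorithm.

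The main obstacle is ensuring the layered construction really gives a bijection between accepting runs and ordered triangles, in particular that the ``return to $a$'' check introduces no extra runs or multiplicities. If the Potechin--Shallit gadget does introduce extra runs, we would first try placing the random bits on the $a$-memory transitions as well, so that every run still carries a distinct monomial.
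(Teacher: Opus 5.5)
Your argument is correct, but it takes a different route from the paper's sketch.

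The paper composes two existing reductions. First, it adjusts the Potechin--Shallit construction so that the number of accepting runs equals the number of triangles. Second, it appeals to the randomised reduction of Abboud et al.\ from (Zero Weight) Triangle to its parity version.

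You instead keep the ordered count $6t$ and place independent random $0/1$ weights, i.e.\ random deletions of transitions, on the three edge layers. Since the layers use disjoint variable families, the six orderings of a triangle give six distinct monomials $x^{(1)}_{ab}x^{(2)}_{bc}x^{(3)}_{ca}$. Hence the parity of the number of runs, as a function of the random bits, is a non-zero multilinear cubic over $\mathrm{GF}(2)$ exactly when a triangle exists.

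This one step does the work of both of the paper's ingredients: it breaks the $6$-fold symmetry and it isolates an odd count. You therefore need neither the ``count each triangle once'' modification nor the external parity reduction. Your instances are plain XNFA of the same size, and six independent trials already give one-sided error below $1/2$, because $(7/8)^6<1/2$. The paper's modular route, in exchange, rests on an established and more general parity reduction and leaves only bookkeeping to check.

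Two small remarks. First, the bound you need is the Boolean-cube form of Schwartz--Zippel: a non-zero multilinear polynomial of degree $d$ is non-zero on at least a $2^{-d}$ fraction of $\{0,1\}^m$. The classical $d/|\mathbb{F}|$ bound is vacuous over $\mathrm{GF}(2)$ for $d=3$.

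Second, the obstacle you flag does not arise if the start vertex $a$ is remembered by timing rather than by letters. A chain reaches the first-layer copy of $a$ after about $a$ steps. From the last-layer copy of the final vertex $v$, a second chain needs about $n-v$ further steps. The fixed word length then forces $v=a$, so the accepting runs are exactly the closed walks of length $3$, i.e.\ the ordered triangles, with no extra runs or multiplicities.
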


\begin{proof}[Proof (sketch)]
We combine two reductions from the literature, adapted suitably.
Firstly,
an existing randomised reduction from
Zero Weight Triangle to
Zero Weight Triangle Parity~\cite[Sec.~5.2]{AbboudFW20} shows
that standard triangle detection reduces to the computation
of the parity of the number of triangles.
Secondly,
the reduction from triangle detection to NFA Acceptance~\cite{PS} can
be easily changed so that the \emph{number} of triangles
coincides with the number of accepting runs for the appropriately
chosen input word.
\end{proof}

\Cref{obs:triangle-to-xnfa} above suggests that, for instance for the regime
$|Q| = \Theta(|w|)$, algorithms
for XNFA Acceptance faster than matrix multiplication
are unlikely to exist.

Intuitively, our reduction from polynomially ambiguous NFA Acceptance to
XNFA Acceptance increases the conjectured hardness level:
from $n^\omega$ (based on triangle detection)
to $n^3$ (based on NFA Acceptance hypothesis on words of length $\Theta(|Q|)$).
Arguably, a catch is that we assume that the NFA has polynomial ambiguity.
However, this assumption is far from trivialising the problem.

\begin{observation}
\label{obs:triangle-to-poly-amb}
In the following reductions, polynomially ambiguous NFA already
suffice for hardness:
\begin{itemize}
\item Potechin and Shallit's reduction from triangle detection~\cite{PS},
\item Bringmann et al.'s reduction from $k$-Clique~\cite{Bringmann24}.
\end{itemize}
\end{observation}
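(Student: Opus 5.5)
The plan is to recall each construction, read off the structure of its automaton, and bound the number of accepting runs on every word by a polynomial in the word length. The relevant tool is the Weber--Seidl characterisation~\cite{WeberSeidl91}. A trim NFA has polynomial ambiguity iff it has no state $q$ with two distinct cycles through $q$ labelled by the same word. Equivalently, every strongly connected component is unambiguous in the appropriate sense.

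A simpler sufficient condition often applies. If the automaton is layered (acyclic apart from self-loops), or if each strongly connected component is deterministic, then the number of runs on a word of length $m$ is at most $|Q|^{c}\cdot m^{c}$, where $c$ is the number of components along a path. So for each reduction I will exhibit a decomposition into a bounded number of components, each of which is deterministic or a single self-loop, and conclude via this run-counting bound.

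\textbf{Potechin--Shallit.} Their NFA for a graph on $n$ vertices has $O(n)$ states and reads a word of length $n+2$. I expect it to be essentially layered: a few copies of the vertex set, with transitions between copies encoding adjacency. The input word walks through vertex indices, where a state can stay via loops on "non-matching" letters and then commits to a triangle vertex. I will check that every cycle is a self-loop on a single state and that the strongly connected components are singletons. Then a run is determined by the $O(1)$ positions at which it moves between layers, together with the choice of states at those moments. This gives at most $|Q|^{O(1)}\, m^{O(1)}$ runs, which is polynomial. Since the word length is $\Theta(n)$, $|Q|^{O(1)}$ is also polynomial in $|w|$. In any case the notion only asks for a polynomial $p$ for the fixed automaton family, and accepting runs on words of arbitrary length are still bounded polynomially because the number of layer changes is constant.

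\textbf{Bringmann et al.} The $k$-Clique reduction similarly uses $O(k)$ or $O(k^2)$ "phases": the automaton guesses clique vertices and verifies adjacencies by scanning encoded blocks. I will check that within a phase the automaton behaves deterministically once it has guessed (a component that is a deterministic counter or scanner), and that guesses happen only at transitions between phases. Then the number of runs is bounded by (choices per guess)$^{O(k^2)}$ times (positions of phase switches)$^{O(k^2)}$. Since $k$ is a constant, this is polynomial.

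The main obstacle is that the precise gadgets in~\cite{Bringmann24} may contain nontrivial cycles, for example counters that loop over blocks. Its phases may also repeat periodically rather than being visited once. I would first try to verify the Weber--Seidl condition directly: show that any cycle within a component is determined by its label, because the component is deterministic. If a component is not deterministic, I would argue that its branches are distinguished by the letters read, so no two distinct same-label cycles share a state. If this fails, I would modify the gadget slightly, for instance by unrolling it into layers indexed by the phase number. This preserves the reduction's size up to constant factors for fixed $k$.
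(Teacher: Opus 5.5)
Your plan is essentially the paper's argument. The paper justifies the observation only by inspecting the constructions, noting that the Potechin--Shallit instances have $O(|w|^3)$ accepting runs on every word with $|Q| = \Theta(|w|)$; and, as your Weber--Seidl criterion reflects, \cref{thm:main} (via \cref{lem:parikh}) needs only the qualitative absence of a useful state with two distinct equally-labelled cycles, not any particular degree of the bounding polynomial.

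You should, however, drop the fallback of unrolling Bringmann et al.'s gadgets by phase index. If phases recur a number of times that grows with $n$, unrolling does not preserve size up to constant factors. In any case it would prove the claim for a modified reduction rather than the cited one.
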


(E.g., hard instances in~\cite{PS}
have $O(|w|^3)$ accepting runs for every~$w$; and $|Q| = \Theta(|w|)$.)

Generally, even \emph{randomised} faster algorithms for the acceptance
problems are of much interest.
The state of the art for NFA Acceptance is an $|Q|^2 \cdot |w| / 2^{\Omega(\sqrt{\log |Q|})}$-time
algorithm, assuming $|\Sigma|=O(1)$:
reduce~\cite{Bringmann24}
the problem to online matrix-vector multiplication in dimension $n$,
for which the naive algorithm achieves time $O(n^3)$
but a randomised one running in time
$n^3 / 2^{\Omega(\sqrt{\log n})}$ exists~\cite{LarsenW17}.
For every $c > 0$, function $2^{c \sqrt{\log n}}$ grows faster than $(\log n)^k$
for all $k > 0$ but slower than $n^\varepsilon$ for all $\varepsilon > 0$.
No similar improvement over $n^3$ is known
for XNFA Acceptance,
and in particular faster online matrix-vector multiplication techniques
are not available over the field of two elements~\cite{LarsenW17}.

The \emph{value} problem for weighted automata, a natural
generalisation of acceptance, is provably hard for
the min-plus semiring~\cite{BackursT17};
the argument is not known to extend to NFA. 

We have already briefly mentioned Kiefer et~al.'s results~\cite{KieferMOWW13},
deciding equivalence of weighted automata over $(\mathbb Q, +, \times)$ in
randomised polylogarithmic parallel time.
As in the case of our reduction from NFA acceptance to XNFA acceptance,
randomness is used by one of their algorithms for polynomial identity testing, relying on the
DeMillo--Lipton--Schwartz--Zippel lemma
(a non-zero polynomial over a large enough field rarely evaluates to zero at
 a random point).
The objectives are different, however: an equivalence checker determines
if two automata assign the same weight to \emph{every} word,
whereas our reduction looks for \emph{some} accepting run in a single \NFA
on a given word,
by checking if the parity (XOR sum) of path-weights is non-zero.

Our reduction from NFA acceptance to XNFA acceptance
under the polynomial ambiguity assumption
goes against a possible intuition that `rings are easy
while semirings are hard' for weighted automata.
In support of this intuition,
contrast polynomial-time algorithms for the equivalence problem
on weighted automata
over the ring of rationals~\cite{Sch61,Tzeng92,Tzeng96,KieferMOWW13,Kiefer20}
with the PSPACE-completeness of equivalence for NFA~\cite{MeyerS72}.
Over the max-plus semiring, again with weights from~$\mathbb Q$,
equivalence is undecidable~\cite{Krob94,AlmagorBK22}.
But, by our results, the acceptance problem is no more
difficult (possibly easier) for polynomially ambiguous automata over the Boolean semiring (classical NFA) than for the field
of two elements (XNFA).

It is also interesting to compare our results with those
on the NL- and $\oplus\mathrm{L}$-complete problems of deciding the value of an entry in the product of multiple matrices over the Boolean semiring and over $\mathrm{GF}(2)$,
respectively;
see the papers~\cite{Damm90,DBLP:journals/iandc/ImmermanL95}
mentioned above.
 
\section{Preliminaries}
\label{sec:preliminaries}

We begin by defining the general model of a weighted automaton.

\begin{definition}[Weighted Automaton] \label{def:WTAT} A \emph{weighted automaton} over a semiring $(\mathbb{K}, +, \cdot, 0, 1)$ is a tuple $A = (Q, \Sigma, \Delta, \lambda, \gamma)$, where:
\begin{itemize}
\item $Q$ is a finite set of states;
\item $\Sigma$ is a finite input alphabet;
\item $\Delta\colon \Sigma \rightarrow \mathbb{K}^{Q \times Q}$ assigns a transition matrix (denoted by $\Delta_{\sigma}$) to each symbol $\sigma \in \Sigma$;
\item $\lambda \in \mathbb{K}^{1 \times Q}$ is the initial weight vector;
\item $\gamma \in \mathbb{K}^{Q \times 1}$ is the final weight vector.
\end{itemize}
The value of $A$ on word $w = a_1 a_2 \dots a_n \in \Sigma^*$ is
$A(w) = \lambda \cdot \Delta_{a_1} \cdot \Delta_{a_2} \cdot \ldots \cdot \Delta_{a_n}\cdot \gamma \in \mathbb{K}$.
\end{definition}

Using this general framework, we define NFA and XNFA by specifying the underlying semiring and the interpretation of the resulting weight.

\begin{definition}[\NFA] A \emph{nondeterministic finite automaton} (\NFA) is a weighted automaton over the Boolean semiring $\mathbb{B} = (\{0,1\}, \lor, \land, 0, 1)$. 
\end{definition}

Equivalently, we can define an \NFA as a tuple $(Q, \Sigma, \delta, I, F)$, where the transition relation $\delta \subseteq Q \times \Sigma \times Q$ replaces $\Delta$, and $I$ and $F$ are subsets of $Q$ corresponding to the non-zero elements of $\lambda$ and $\gamma$, respectively. Whenever there is a path in the underlying labelled graph $(Q, \Sigma, \delta)$ from a state $q$ to a state $q'$, we say $q'$ is reachable from $q$ and $q$ co-reachable from~$q'$. An automaton is \emph{trim} when every state is reachable from $I$ and co-reachable from $F$.  

An accepting run of $A$ on a word $w$ is a path $\rho$ in the underlying labelled graph $(Q, \Sigma, \delta)$ that starts at an initial state $i \in I$, ends in a final state $f \in F$, and whose edge labels yield $w$ when concatenated. 
A word $w$ is accepted if $A(w) = 1$, which corresponds to the existence of at least one accepting run of $A$ on $w$.

\begin{definition}[\XNFA] A \emph{symmetric difference automaton} (\XNFA) is a weighted automaton over the finite field of two elements $\mathrm{GF}(2) = (\{0,1\}, \oplus, \cdot, 0, 1)$.  
\end{definition}

Similarly to NFA, we can define an \XNFA as a tuple $(Q, \Sigma, \delta, I, F)$, and an accepting run of $A$ on a word $w$ is a $w$-labelled path from $I$ to $F$. An XNFA $A$ accepts a word $w$ if $A(w) = 1$, or equivalently, if the number of accepting runs of $A$ on $w$ is odd.

\begin{definition}[Ambiguity~\cite{WeberSeidl91}] Let $A$ be an \NFA or \XNFA. 
For a word $w \in \Sigma^*$, let $d_A(w)$ denote the number of accepting runs of $A$ on $w$.
We say $A$ is \emph{finitely-ambiguous} if $d_A(w)$ is $\O{1}$ as $\size{w} \to \infty$;
\emph{polynomially ambiguous} if $d_A(w)$ is $\O{p(\size{w})}$ for some polynomial $p$; and \emph{exponentially ambiguous} otherwise.
\end{definition}
If an \NFA (or \XNFA) is not polynomially ambiguous, then 
$d_A(w_i)$ is $2^{\Omega(|w_i|)}$ for some infinite sequence of words $(w_i)_{i\in\N}$~\cite{WeberSeidl91}.

When \NFA or \XNFA transition matrices are sparse (i.e., most entries are zero), representing them as full $Q \times Q$ matrices is inefficient. In an \NFA or \XNFA, representing the set $\set{q \suchthat (p, \sigma,q)\in\delta}$, for all $p \in Q$ and  $\sigma \in \Sigma$ as a single-linked list is standard and convenient for sparse automata.
 
\section{Analysis of the dynamic programming algorithm for Acceptance and Value problems}
\label{sec:new_algorithms_fro_the_acceptance_and_value_problems}

The Acceptance problems for NFA and XNFA are special cases of the value problem for weighted automata: given a weighted automaton $A$ and a word $w$, compute $A(w)$ (see \Cref{def:WTAT}). In this section, we show that, when the automaton~$A$ is $k$-ambiguous, the value problem can be decided faster than in the general case.

We describe a standard dynamic programming algorithm that computes the value of $A$ on~$w$.
We prove that, when $A$ has a bounded number of runs to each final state on~$w$, this algorithm does not make as many transition traversals as in the worst case. As a consequence, under the assumption of $k$-ambiguity, this algorithm is guaranteed to run faster than the standard bound of $O((|Q|+ |\delta|)\, |w|)$. 

Let $I$ be the initial set, i.e., the set of all states that have non-zero entries in $\lambda$, and $F$ be the final set, i.e., the set of all states with non-zero entries in $\gamma$. 

\newcommand*{\code}{\texttt}
\newcommand*{\myif}{\textbf{if}\ }
\newcommand*{\mythen}{\textbf{then}\ }
\newcommand*{\myelse}{\textbf{else}\ }
\newcommand*{\myor}{\textbf{or}\ }
\newcommand*{\mycontinue}{\textbf{continue}\ }
\newcommand*{\mybreak}{\textbf{break}\ }
\newcommand*{\myreturn}{\textbf{return}\ }
\newcommand*{\mytrue}{\textbf{true}\ }
\newcommand*{\myfalse}{\textbf{false}\ }
\newcommand*{\myaccept}{\textbf{accept}\ }
\newcommand*{\myreject}{\textbf{reject}\ }
\newcommand*{\mydo}{\textbf{do}}
\begin{algorithm}
        \begin{algorithmic}[1]
            \Require Weighted automaton $A$ over a semiring $\mathbb{K}$ and word $w \in \Sigma^*$
            \Ensure The value of $A$ on $w$. 
            \State trim $A$ \Comment{ensure every state is reachable from $I$ and co-reachable from $F$}
            \State $\mathsf{prev}[q] \gets \lambda[q],\ \mathsf{curr}[q] \gets 0 \textbf{ for each } q \in Q$
             \For{$i = 1$ to $|w|$}
                \State $\mathsf{curr}[q] \gets 0 \textbf{ for each } q \in Q$
                \State \textbf{for each} $p \textbf{ if }\mathsf{prev}[p] \neq 0$ \mydo
                    \State \hspace*{\algorithmicindent}\textbf{for each} $q \in \delta(p, w[i])$ \mydo
                        \State \hspace*{\algorithmicindent}\hspace*{\algorithmicindent}$\mathsf{curr}[q] \gets \mathsf{curr}[q] + \mathsf{prev}[p]\cdot \Delta_{w[i]}[p][q]$ 
                        
                        \Comment{here $+$ and $ \cdot$ are semiring operations in $\mathbb{K}$}
                          \State $\mathsf{prev} \gets \mathsf{curr}$                    
            \EndFor
            \State \myreturn $\mathsf{curr}  \cdot \gamma$
        \end{algorithmic}
\caption{Standard dynamic programming algorithm for evaluation of weighted automata}
\label{alg:k-amb-algo}
\end{algorithm}

\begin{theorem}\label{thm:kleafamb}
    \Cref{alg:k-amb-algo} computes the value of a weighted automaton $A$ on a word $w$. If $w$ has, for every final state $f \in F$,  at most $k$ runs from $I$  to $f$, then \Cref{alg:k-amb-algo} uses at most $2 k \, |w| \, |Q|$ semiring operations and additionally $O(k \, |w| \, |Q|  + |\delta|)$ time.
\end{theorem}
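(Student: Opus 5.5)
Correctness comes first and is routine. I will prove by induction on $i$ that after the $i$-th iteration $\mathsf{curr}[q] = (\lambda \cdot \Delta_{w[1]} \cdots \Delta_{w[i]})[q]$. Skipping states $p$ with $\mathsf{prev}[p]=0$ does not change the sums, because $0$ is absorbing for $\cdot$ and neutral for $+$. Trimming also leaves the value unchanged. A run that passes through a state which is not reachable from $I$ or not co-reachable from $F$ cannot go from $I$ to $F$, so such states contribute $0$ to $\lambda \Delta_{a_1}\cdots\Delta_{a_n}\gamma$. Removing them deletes exactly the corresponding rows and columns.

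The main point is the count of operations. Let $S_i$ be the set of states $q$ of the trimmed automaton that are reached by at least one run from $I$ on the prefix $w[1..i]$. The states $p$ examined in iteration $i$ satisfy $\mathsf{prev}[p]\neq 0$, so they all lie in $S_{i-1}$; zeroes arising from cancellation only help. Each such $p$ contributes one addition and one multiplication for every outgoing $w[i]$-transition, and each such transition ends in a state of $S_i$. The work in iteration $i$ is therefore at most $2\cdot\#\{(p,q) : p\in S_{i-1},\ (p,w[i],q)\in\delta\}$. This equals twice the number of distinct "run-edges" used at step $i$.

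The key step, and the main obstacle, is to bound this edge count by $k|Q|$ per step. The correct measure is the number of runs on the whole word $w$ that reach final states, since the hypothesis constrains only those runs.

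Here is the step I expect to need care. For the edge count to be bounded, each edge $(p,q)$ used at step $i$ must extend to a run on all of $w$ that ends in $F$. Trimming guarantees only that $q$ is co-reachable from $F$ on some word, not on the remaining suffix $w[i+1..]$. My first attempt is therefore to interpret the hypothesis the way the statement's accounting suggests. For each fixed state $q$, the number of distinct edges into $q$ at step $i$ is at most the number of distinct runs from $I$ to $q$ on $w[1..i]$. For $q$ "on the way" to some $f$, that number is at most $k$, since distinct prefixes combined with a common continuation to $f$ give distinct runs to $f$. This yields at most $k$ incoming used edges per state per step, so the total is at most $2k|Q|$ operations per step and $2k|w||Q|$ overall.

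If the suffix issue breaks this, I will fall back on restricting the count to states that are still live with respect to the remaining suffix. Failing that, I will read the hypothesis as a bound on runs from $I$ to each state on every prefix, which is the form that $k$-ambiguity of a trimmed automaton provides after pumping arguments.

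The remaining costs are bookkeeping. Trimming is two graph searches costing $O(|Q|+|\delta|)$. Resetting $\mathsf{curr}$ and scanning $\mathsf{prev}$ costs $O(|Q|)$ per iteration. Traversing the adjacency lists $\delta(p,w[i])$ costs time proportional to the operations already counted, given the linked-list representation. Together these give $O(k|w||Q|+|\delta|)$.
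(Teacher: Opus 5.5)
\paragraph{The suffix problem is real.} You have identified the actual difficulty, but none of your three routes proves the statement as written. None can: if the hypothesis constrains only runs on the input word $w$, the claim is false.

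Take states $q_0$ (initial), $p_1,\dots,p_m$ and $f$ (final) over the Boolean semiring, with transitions $q_0\xrightarrow{a}p_j$, $p_j\xrightarrow{a}p_l$ for all $j,l$, and $p_j\xrightarrow{b}f$. This automaton is trim. The word $w=a^n$ has no run to $f$ at all, so the hypothesis holds with $k=1$. Yet every iteration after the first performs $m^2$ updates, giving $\Theta(nm^2)$ operations and time. The claimed bounds are $2n(m+2)$ operations and $O(nm+m^2)$ time, which fail for $n=m$ large.

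So the step in your first attempt, from states ``on the way'' to $f$ along $w$ to every state, is not merely unjustified but false. Your first fallback does not help either. Restricting the count to states that are live for the remaining suffix does not bound \Cref{alg:k-amb-algo}, which processes every $p$ with $\mathsf{prev}[p]\neq 0$, live or not.

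\paragraph{The paper's proof has the same hole.} It extends $k+1$ runs to $q$ on $w[1,i]$ by an arbitrary word $v$ obtained from trimness. It then claims a contradiction on $w[1,i]v$, a word the hypothesis says nothing about. The argument really needs the bound for all words: at most $k$ runs from $I$ to each $f\in F$ on every word. This is what \cref{cor:kambweightedvalue} supplies via $k$-ambiguity.

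Under that hypothesis your last fallback coincides with the paper's proof, and no pumping is needed. Suppose there are more than $k$ runs from $I$ to $q$ on $w[1..i]$. Follow them by any $v$ leading from $q$ to some $f$, which exists by trimness. This gives more than $k$ runs on $w[1..i]v$ to $f$, a contradiction.

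Your observation that distinct predecessors $p$ with $\mathsf{prev}[p]\neq 0$ yield distinct runs to $q$ is correct. Combined with the above, it caps the updates to each $\mathsf{curr}[q]$ per iteration at $k$. Your remaining bookkeeping matches the paper's.
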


\begin{remark}
     Even when $A$ is $k$-ambiguous and trimmed, a single state may have an outdegree greater than $k|\Sigma|$; $k$-ambiguity bounds the number of accepting runs (on a given input string), not the local branching of the transition graph.
\end{remark}

\begin{proof}  
We claim that during every iteration of the outermost \textbf{for} loop, for each state $q \in Q$, the entry $\textsf{curr}[q]$ is assigned (in lines 5--7) at most $k$ times.

For all $f \in F$, let $A$ have at most $k$ runs from $I$ to $f$ on $w$. Consider iteration $i \leq \size{w}$. Suppose there is a state $q \in Q$ such that $\textsf{curr}[q]$ is assigned more than $k$ times during this iteration. This means that there are at least $k+1$ paths from the initial set $I$ to $q$ labelled by the same prefix $w[1, i]$. By line 1, since $A$ is trim, there must exist a word $v \in \Sigma^*$ that labels a path from $q$ to some $f \in F$. This would yield $k+1$ runs on the word $w[1,i]v$, from $I$ to $f$, contradicting our assumption.

To ensure the given running time bound,
note that each of the sets $\delta(p, \sigma)$ for every state $p \in Q$ and every letter $\sigma \in \Sigma$ can be implemented as a singly-linked list. 
Trimming the automaton in line~1 takes $O(|Q| + |\delta|)$ time. 
Every iteration of the outermost \textbf{for} loop performs at most $k\, |Q|$ assignments to $\textsf{curr}$, and lines~4 and 8 perform an additional $|Q|$ assignments to \textsf{curr} and \textsf{prev} each. Overall, there are exactly $|w|$ iterations.

Hence,  the algorithm  runs in  $O(k\, \size{w}\, \size{Q} + \size{\delta})$ time.
\end{proof}

Note that a $k$-ambiguous automaton always has the property that, for every word $w$ and final state $f \in F$, there are at most $k$ runs from $I$ to $f$ on $w$. In terms of the usual degree of ambiguity of the automaton, \Cref{thm:kleafamb} gives us the following bound: 

\begin{corollary}\label{cor:kambweightedvalue}
  Given a $k$-ambiguous weighted automaton $A$ and a word $w$, \Cref{alg:k-amb-algo} computes the value $A(w)$ using at most $2 k \, |w| \, |Q|$ semiring operations and additionally $O(k \, |w| \, |Q|  + |\delta|)$ time.
\end{corollary}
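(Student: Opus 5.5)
The corollary follows from \Cref{thm:kleafamb} once we check its hypothesis for a $k$-ambiguous automaton. Fix $A$ $k$-ambiguous and a word $w$; we need that for every final state $f \in F$ there are at most $k$ runs on $w$ from $I$ to $f$. Correctness of the returned value $A(w)$ needs no ambiguity assumption and is inherited directly from the first claim of \Cref{thm:kleafamb}.

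The key step is to show that every run from $I$ to some $f \in F$ on $w$ is itself an accepting run of $A$ on $w$. By definition, an accepting run is a $w$-labelled path from an initial state to a final state, and $f$ is final. Hence the runs from $I$ to a fixed $f$ form a subset of all accepting runs on $w$. By $k$-ambiguity there are at most $k$ accepting runs in total, so there are at most $k$ such runs for each $f$.

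The only subtlety is what ``$k$-ambiguous'' means: the preliminaries define finite ambiguity asymptotically, as $d_A(w) = O(1)$. I will read $k$-ambiguous as the uniform bound $d_A(w) \le k$ for \emph{every} word $w$, not just for sufficiently long words. This uniform bound is exactly what the argument needs. If only an eventual bound were intended, one would have to take $k$ to be the maximum of $d_A$ over all words, which is finite whenever $d_A$ is bounded.

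With the hypothesis verified for this $w$, \Cref{thm:kleafamb} gives at most $2k\,|w|\,|Q|$ semiring operations and $O(k\,|w|\,|Q| + |\delta|)$ additional time, which is the claim. The proof of the theorem also uses the bound for prefixes $w[1,i]$ extended by co-reaching words $v$. This is already handled inside the theorem's proof, and in any case $k$-ambiguity bounds the runs on $w[1,i]v$ directly. So nothing remains to prove, and I expect no real obstacle beyond stating the interpretation of $k$-ambiguity explicitly.
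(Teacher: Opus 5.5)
Your proposal is correct and follows the paper's own argument: the paper notes that a $k$-ambiguous automaton has, on every word, at most $k$ runs from $I$ to each $f\in F$, and then invokes \Cref{thm:kleafamb}. Your closing caveat is not something ``already handled inside the theorem's proof''; it is the step the argument actually rests on, because that proof bounds assignments using runs on the words $w[1,i]v$ rather than on $w$ (with a hypothesis on $w$ alone the bound does fail, e.g.\ for a trim automaton with a reachable $a$-clique on $m$ states that reaches $F$ only via a letter $c\ne a$, and $w=a^n$), so it is exactly your uniform reading $d_A(u)\le k$ for all $u\in\Sigma^*$ that makes the application valid.
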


As an immediate consequence, the acceptance problem for finitely ambiguous NFA and XNFA is decidable faster than in time $O(|\delta||w|)$ too.
\begin{corollary}\label{cor:kambXNFAacc}
    Given a $k$-ambiguous NFA or XNFA $A$ and a word $w$, \Cref{alg:k-amb-algo} decides if $A$ accepts $w$ in $O( k|w| \, |Q| + |\delta|)$ time. 
\end{corollary}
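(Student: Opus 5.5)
This follows from \Cref{cor:kambweightedvalue}, specialised to the Boolean semiring $\mathbb{B}$ for NFA and to $\mathrm{GF}(2)$ for XNFA. In both cases the acceptance question is whether $A(w)=1$, so deciding acceptance is exactly computing the value $A(w)$. Given the value, we compare it with $1$ in constant time.

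The plan has three steps.

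\begin{enumerate}
\item Observe that an NFA or XNFA is, by the definitions in \Cref{sec:preliminaries}, a weighted automaton over $\mathbb{B}$ or $\mathrm{GF}(2)$, respectively. Its transition matrices $\Delta_\sigma$ are represented implicitly by the singly-linked lists $\delta(p,\sigma)$. Each entry $\Delta_{w[i]}[p][q]$ used in line~7 of \Cref{alg:k-amb-algo} equals $1$ for every $q\in\delta(p,w[i])$, so it needs no lookup.
\item Note that $k$-ambiguity means at most $k$ accepting runs per word. This implies the hypothesis of \Cref{thm:kleafamb}, namely at most $k$ runs from $I$ to each individual $f\in F$, since those runs are among the accepting runs. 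So \Cref{cor:kambweightedvalue} applies and gives at most $2k\,|w|\,|Q|$ semiring operations plus $O(k\,|w|\,|Q|+|\delta|)$ additional time.
\item Check that unit-cost arithmetic holds here. In $\mathbb{B}$ the operations are $\lor$ and $\land$, and in $\mathrm{GF}(2)$ they are $\oplus$ and $\cdot$, all on single bits. So each semiring operation costs $O(1)$ time, and the total is $O(k\,|w|\,|Q|+|\delta|)$.
\end{enumerate}

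The only subtle point is the final line, which computes $\mathsf{curr}\cdot\gamma$. This is an inner product costing $O(|Q|)$ operations, which is absorbed in the bound for $|w|\ge 1$. For $w=\varepsilon$ the answer is $\lambda\cdot\gamma$, which takes $O(|Q|)$ time and is also within the bound after trimming. (As written, the algorithm initialises $\mathsf{curr}$ to $0$, so the empty word needs separate handling.)

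I expect no real obstacle. The one thing to confirm is that the trimming step and the per-iteration resets of $\mathsf{curr}$ and $\mathsf{prev}$ already lie within the bound, and they do, as shown in the proof of \Cref{thm:kleafamb}.
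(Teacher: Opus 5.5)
Your proposal is correct and matches the paper, which presents this corollary as an immediate consequence of \Cref{cor:kambweightedvalue}: arithmetic in $\mathbb{B}$ and $\mathrm{GF}(2)$ costs $O(1)$ per operation, and acceptance is exactly $A(w)=1$. Your side remark that \Cref{alg:k-amb-algo} as written returns $0$ on $w=\varepsilon$ is also correct, since $\mathsf{curr}$ is initialised to $0$ and the loop never runs; this is an edge case the paper does not address.
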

Whenever $k \, |Q| = o(|\delta|)$, this improves on the naive NFA acceptance bound of $O(|w|\, |\delta|)$. By setting $k=1$,  \Cref{alg:k-amb-algo} decides if an unambiguous automaton $A$ accepts $w$ in $O( |w| \, |Q| + |\delta|)$ time. 

A consequence of our tighter bound for unambiguous NFA Acceptance is the following bound on the complexity of 0--1 matrix multiplication if the product is known to be a 0--1 matrix too. This resolves an open problem posed by Kiefer and Ryzhikov~\cite{kiefer_et_al:LIPIcs.STACS.2025.61}: 

\begin{corollary}\label{cor:01matrices}
    Let $P$ and $R$ be 0--1 matrices of size $n \times n$ such that the
product $PR$ is also a 0--1 matrix. Then $PR$ can be computed from $P$ and $R$ in
time $O(n^2)$.
\end{corollary}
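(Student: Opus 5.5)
We reduce the computation of $PR$ to $n$ runs of \Cref{alg:k-amb-algo}, one per row of the product, on a three-layer automaton that is unambiguous on every path from its initial state to each of its final states. Fix a row index $i \in \{1,\dots,n\}$. Build an automaton $A_i$ over the one-letter alphabet $\Sigma = \{a\}$ with state set $\{s\} \cup \{u_1,\dots,u_n\} \cup \{v_1,\dots,v_n\}$. Its transitions are $s \xrightarrow{a} u_j$ whenever $P_{ij}=1$, and $u_j \xrightarrow{a} v_l$ whenever $R_{jl}=1$. We take $I=\{s\}$ and $F=\{v_1,\dots,v_n\}$, and the input word is $w = aa$. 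The number of runs from $s$ to $v_l$ on $aa$ is then exactly $\sum_j P_{ij}R_{jl} = (PR)_{il}$. Since $PR$ is a 0--1 matrix, this number is at most $1$ for every final state $v_l$. The hypothesis of \Cref{thm:kleafamb} therefore holds with $k=1$. Here we use the per-final-state formulation of \Cref{thm:kleafamb}, not full unambiguity of $A_i$; with $k=1$ the two coincide in any case.

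The point we need to handle carefully is that we want the whole row $(PR)_{i,\cdot}$, not only the scalar $\mathsf{curr}\cdot\gamma$. After the loop, the array $\mathsf{curr}$ holds, for each $v_l$, the number of runs reaching it, computed over $\mathbb{N}$ or over the Boolean semiring (both give the same answer because the count is $0$ or $1$). So we read off $\mathsf{curr}[v_l]$ directly. By the counting argument in the proof of \Cref{thm:kleafamb}, each entry is assigned at most $k=1$ time per iteration. There are only $|w|=2$ iterations, so the work beyond trimming is $O(|Q|)=O(n)$.

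The main obstacle is the $|\delta|$ term. The transition set of $A_i$ includes all of $R$, up to $n^2$ edges, and paying $O(|\delta|)$ separately for each of the $n$ rows would cost $O(n^3)$. We avoid this by building the adjacency lists for the $u \to v$ layer once, in $O(n^2)$ time from $R$, and sharing them across all $i$. We also do not trim from scratch for each $i$. Every $v_l$ is final, so co-reachability is trivial. States $u_j$ unreachable from $s$ simply have $\mathsf{prev}=0$ and are skipped by line 5, and the loop in line 5 only iterates over $p$ with $\mathsf{prev}[p]\neq 0$.

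What remains is to ensure that a reachable $u_j$ with an outgoing edge to $v_l$ cannot create a second run into $v_l$. This is exactly where the 0--1 property of $PR$ is used. The number of edge traversals out of the $u$-layer in row $i$ equals $\sum_l (PR)_{il} \le n$. Traversing $s$'s edges costs $O(n)$. Each reachable $u_j$ with $R$-row traversal costs the number of its out-edges. Scanning row $i$ of $P$ also costs $O(n)$.

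So each row costs $O(n)$ after the $O(n^2)$ shared preprocessing, and the total time is $O(n^2)$.
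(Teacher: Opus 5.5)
Your proof is correct and follows essentially the paper's route: a three-layer unary automaton built once from $P$ and $R$ in $O(n^2)$ time, then \Cref{alg:k-amb-algo} run $n$ times (one start state per row) at $O(n)$ cost each. Where the paper first deletes states not co-reachable from the last layer and appeals to \Cref{thm:kleafamb}, you skip trimming and bound the edge traversals directly by $\sum_l (PR)_{il} \le n$, which is equally valid (and is what actually justifies your ``at most one assignment'' claim without trimming).

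One aside is wrong but harmless: the per-final-state condition with $k=1$ does not coincide with unambiguity here, since $A_i$ can have up to $n$ accepting runs on $aa$ (one to each $v_l$). It is therefore essential that you rely on the per-final-state formulation, as you do.
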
 

\begin{proof} Form the automaton $A$ over the alphabet $\{a\}$ on states $\{q_1,\dots , q_n\} \times \{1, 2, 3\}$ with the transitions defined as $(q_i, 1) \xrightarrow[]{a} (q_j, 2)$ if and only if  $P[i,j] = 1$ and $(q_i, 2) \xrightarrow{a} (q_j, 3)$ if and only if $R[i,j] = 1$. 
The $i$-th row of $PR$ has $1$ in the $j$-th component precisely when $(q_i, 1)$ reaches $(q_j, 3)$ on reading $aa$. 

First, delete any state that is not co-reachable from $Q \times \{3\}$. This can be done in $O(n^2)$ time. 
Then run \Cref{alg:k-amb-algo} (without any trimming) $n$ times, for the $i$-th iteration setting $(q_i, 1)$ as the unique initial state. Each run of \Cref{alg:k-amb-algo} computes the $i$-th row of $PR$. 
Since $PR$ is a 0--1 matrix, $aa$ has at most one run from $(q_i, 1)$ to  $(q_j,3)$ for all $i, j$. By \Cref{thm:kleafamb}, the $i$-th row of $PR$ can be computed by \Cref{alg:k-amb-algo}  in time $O(|Q|\cdot |w|) = O(n)$, so $PR$ can be computed in time $O(n^2)$. 
\end{proof}

\section{Reducing polynomially ambiguous NFA Acceptance to XNFA Acceptance}
\label{sec:accceptance_to_value}

In this section,  we encode \NFA acceptance of $w$ as the question of whether a weighted automaton over a field $\mathrm{GF}(2^k)$ computes a non-zero value on the input $w$. We then translate this question into \XNFA acceptance.
To achieve this, we apply the Schwartz-Zippel Lemma for polynomial rings over fields (\cref{lem:SZ}).

Let $\GF{}$ be a field.
By $\GF{}[x_1, \ldots, x_m]$ we denote the ring of (multivariate) polynomials
with variables $x_1, \ldots, x_m$ and coefficients from $\GF{}$.

In order to apply the Schwartz-Zippel Lemma, we first turn an arbitrary \NFA $A$ into a weighted automaton, denoted $A_{\GF{}[\vec x]}$, over a suitable ring of polynomials.
In more detail, take $\vec x = (x_1, \ldots, x_{|\delta|})$, one variable $x_t$ per transition~$t$.
For a field~$\GF{}$, the weighted automaton $A_{\GF{}[\vec x]}$ over the ring
$\GF{}[\vec x]$ is obtained from $A$ by using $x_t$ as the weight of transition~$t$,
for all $t \in Q \times \Sigma \times Q$.
By $A_{\GF{}[\vec x]}(w)$ we denote the multivariate polynomial (over the variables $\vec x$) obtained when evaluating this weighted automaton on $w$.  

We next establish a relationship between the acceptance of a word $w$ by $A$ and the corresponding value $A_{\GF{}[\vec x]}(w)$, under the assumption that $A$ is polynomially ambiguous.

For a word $u = t_1 t_2 \cdots t_l \in T^*$, over the alphabet $T$, its \emph{Parikh image} $P_u\colon T \to \N$ is the function that returns the number of times each symbol (in $T$) appears in $u$. Given that at the cost of adding one state to $Q$, we may assume that we have a single initial state. In the next result, we assume that, without loss of generality, there is exactly one initial state.

\begin{lemma}\label{lem:parikh}
    Let $A = (Q, \Sigma, \delta, \{q_0\}, F)$ be an \NFA.
    Assume there exists $w\in\Sigma^*$ that admits two different accepting runs (i.e., transition sequences) with the same Parikh image.
    Then $A$ has exponential ambiguity.
\end{lemma}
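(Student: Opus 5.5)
We will show that two distinct accepting runs $\rho \neq \rho'$ on the same word $w$ with equal Parikh images $P_\rho = P_{\rho'}$ force the pattern that characterises exponential ambiguity. That pattern, from Weber and Seidl~\cite{WeberSeidl91}, is a state $p$ and a word $v$ with two distinct $v$-labelled cycles from $p$ to $p$, with $p$ reachable from $q_0$ and co-reachable from $F$. Once we have it, pumping finishes the proof. Fix $u$ from $q_0$ to $p$ and $z$ from $p$ to some $f \in F$. Then $uv^mz$ has at least $2^m$ accepting runs, because each copy of $v$ may independently follow either cycle. So $A$ is not polynomially ambiguous.

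We first reduce to the trim part of $A$. Every state on $\rho$ or $\rho'$ is reachable and co-reachable, so the states we find on these runs automatically satisfy the side conditions.

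\textbf{Finding the cycle pair.} Write $\rho = p_0 p_1 \cdots p_n$ and $\rho' = p'_0 p'_1 \cdots p'_n$ as state sequences with $p_0 = p'_0 = q_0$ and $n = |w|$. Both runs read the same letters in the same order, so they differ only in the states they visit. Consider the product automaton $A \times A$, in which the pair $(\rho,\rho')$ is a single path through $(p_i, p'_i)$. Since $\rho \neq \rho'$, there is a first index $j$ with $p_j \neq p'_j$, and the path leaves the diagonal at step $j$.

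The main obstacle is showing the two runs meet again after diverging. Here equal Parikh images do the work. Let $k$ be the first index at or after which the runs never meet again, i.e.\ $p_i \neq p'_i$ for all $i \geq k$. If such a $k$ exists, compare the multisets of transitions of $\rho$ and $\rho'$ restricted to positions $\geq k$. They must match, since the earlier portions can be rearranged. The counting identity is this: the out-degree minus in-degree of each state in the transition multiset of a path equals $[\text{start}] - [\text{end}]$. Equal Parikh images give equal degree balances on the whole runs. Hence the suffixes from $p_k, p'_k$ both end in the same final-state balance, but distinct start states then yield a contradiction via the flow identity.

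This step is delicate and is where I expect difficulty. The cleaner route is to use the flow argument directly. Equal transition multisets mean the runs traverse the same transitions. The first transition $t$ used by $\rho$ at step $j$ that differs from $\rho'$'s must also appear somewhere later in $\rho'$, at a position $m > j$. Likewise, $\rho'$'s step-$j$ transition $t'$ must appear later in $\rho$. From this I plan to extract a state $p = p_{j-1}$ revisited by $\rho'$ at position $m-1$ with a cycle. The diverging segments of both runs between the two visits, read off in the product, give two distinct cycles through the same pair state.

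\textbf{Closing the product cycle.} To obtain the two cycles concretely, I would apply pigeonhole in $A \times A$ along a suitable repetition, or, failing that, appeal to the known EDA criterion. The criterion says that $A$ has exponential ambiguity iff $A \times A$ has a path from a diagonal state $(p,p)$ to an off-diagonal state $(r,s)$ and back to $(p,p)$ with the same labels. The product path of $(\rho,\rho')$ leaves the diagonal at $(p_{j-1},p_{j-1})$. The transition-multiset argument above should show it returns, via a suitable common label sequence, to some diagonal state that is strongly connected to $(p_{j-1},p_{j-1})$. Verifying this return is the core step, and I would first try an induction on $|w|$: strip a common prefix or suffix of $\rho$ and $\rho'$, and whenever their Parikh images of proper prefixes coincide, split the problem into two smaller instances.
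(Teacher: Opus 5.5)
\textbf{There is a genuine gap, and it sits at the step you yourself flag as the core one.}

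Your first attempt, via the index $k$, relies on the claim that the transition multisets of $\rho$ and $\rho'$ at positions $\ge k$ coincide. Nothing justifies this: equal Parikh images of the whole runs say nothing about the prefixes up to position $k$. More importantly, re-meeting is not what is needed. Your own flow identity already gives $p_n = p'_n$. Yet diverging and later re-meeting is compatible with finite ambiguity; think of two parallel $ab$-paths from $q_0$ to $f$. What must be shown is that both runs are at the \emph{same} state at the \emph{same} time, \emph{inside the strongly connected component of $p_{j-1}$}.

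Your `cleaner route' does not give this. The recurrence of $t$ in $\rho'$ at step $m$ yields a cycle at $p_{j-1}$ in $\rho'$ labelled $w[j, m-1]$. The recurrence of $t'$ in $\rho$ at some step $m'$ yields a cycle in $\rho$ labelled $w[j, m'-1]$. These labels differ unless $m = m'$. Also, in $A \times A$ the pair at time $m-1$ is $(p_{m-1}, p_{j-1})$, which is off the diagonal. The pigeonhole and induction suggestions are not carried out. So the proposal never produces two distinct cycles with the same label.

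\textbf{The missing idea} (this is the paper's argument) is to synchronise the runs via the strongly connected components of their common support graph.
\begin{itemize}
\item Let $G$ consist of the transitions occurring in $\rho$. By equality of Parikh images, $\rho'$ uses exactly the same transitions, so both runs see the same SCC decomposition of $G$.
\item A walk cannot leave an SCC of its own support graph and come back. Hence each run occupies each SCC during one contiguous time interval.
\item Let $C$ be the SCC of $p_{j-1}$. It is non-trivial, since $\rho'$ revisits $p_{j-1}$ at time $m-1 \ge j$.
\item Both runs enter $C$ at the same time, because they coincide up to time $j-1$.
\item Each run stays in $C$ for exactly as many steps as the total Parikh count of the transitions inside $C$. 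So both leave at the same time $j'-1 \ge m-1$.
\item They leave through the unique transition of $G$ exiting $C$. If $C$ is the last component, they instead end in the same state by your flow identity; this is a case the paper's exit-edge argument glosses over. Either way, both runs are at the same state $q_e \in C$ at time $j'-1$.
\item Append a $C$-internal path from $q_e$ back to $p_{j-1}$, labelled $w_l$. This gives two distinct closed walks at $p_{j-1}$, both labelled $w[j, j'-1]\,w_l$.
\end{itemize}
In your product language, this is exactly the return to the diagonal state $(q_e,q_e)$, which is strongly connected to $(p_{j-1},p_{j-1})$ via diagonal states.
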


\begin{proof}
Let $w=a_1 \cdots a_k$, then $w[i] = a_i$ for $i \in \set{1, 2, \ldots, k}$, and $w[i,j] = a_i a_{i+1}\cdots a_j$.
For $u \in \delta^*$ any sequence of transitions, let
    $G_A(u)$ be a subgraph of the automaton graph containing only edges from $u$.
    Formally, $G_A(u) = (Q^\prime, E)$, where $E = \{t\mid u[i] = t\text{ for some $i$}\}$, and $Q^\prime$ is the minimal subset of $Q$ incident with edges in $E$.

      Assume $w \in \Sigma^*$ has two different accepting runs $\pi, \pi' \in \delta^*$ with the same Parikh image.
    We show the existence of a \emph{useful} state $q \in Q$ (i.e., a state reachable from the initial state and co-reachable from a final state) and a word such that there are two different closed walks from $q$ to $q$ labelled by this word. This implies that $A$ is exponentially ambiguous; see~\cite{WeberSeidl91}.

    Let $i$ be the first position such that $\pi[i] \neq \pi'[i]$.
    Therefore, $\pi[i] = (q_b, a, r)$ and $\pi'[i] = (q_b, a, s)$ for some $q_b,r,s \in Q$, $r \neq s$, and $a \in \Sigma$.
    But since $P_\pi(\pi'[i]) = P_{\pi'}(\pi'[i])$, $\pi$ revisits state $q_b$ and there is some $j > i$ where it takes the other transition, i.e., $\pi[j] = \pi'[i]$.
    Therefore, $G_A(\pi[i,j-1])$ is strongly connected.
    Choose minimal $i' \leq i$ and maximal $j' \geq j$ such that $G_c = G_A(\pi[i', j'-1])$ is a strongly connected component of $G_A(\pi)$.
    By strong connectivity, $\pi$ does not use the edges of $G_c$ outside the infix $\pi[i', j'-1]$.
 By the equality of Parikh images, $G_c$ is also a strongly connected component in $G_A(\pi')$.
    By strong connectivity, $\pi'$ uses the edges of $G_c$ on some infix $\pi'[i'', j''-1]$.
    But $q_b$ is a vertex of $G_c$ and is visited by both $\pi$ and $\pi'$, and the prefixes of $\pi$ and $\pi'$ before $i$ are equal, so
    both runs enter $G_c$ at the same time, i.e., $i' = i''$.
    By equality of Parikh images, the intervals have equal lengths: $j' - i' = \sum_{t \in E} P_\pi(t) = \sum_{t \in E} P_{\pi'}(t) = j'' - i'$, thus $j'' = j'$, i.e., the runs leave $G_c$ at the same time.
    Therefore, both $\pi$ and $\pi'$ use the edges of $G_c$ exactly in the infixes $\pi[i', j'-1]$ and $\pi'[i', j'-1]$, both labelled by $w[i', j'-1]$.
    
 By the equality of Parikh images, the runs exit $G_c$ via the same edge $(q_e, \cdot, \cdot)$, so they both visit the same state $q_e$ as the last vertex of $G_c$: $\pi[j'-1] = (\cdot, \cdot, q_e)$ and $\pi'[j'-1] = (\cdot, \cdot, q_e)$.
    By strong connectivity, there exists some $w_l \in \Sigma^*$ such that there is a run $\pi_l$ from $q_e$ to $q_b$ labelled by $w_l$ in $G_c$.
    Then $\pi[i,j'-1]w_l$ and $\pi'[i,j'-1]w_l$ are two different closed walks from $q_b$ to $q_b$, labelled with the same word $w[i,j'-1]w_l$.
    Therefore, $A$ is exponentially ambiguous.
\end{proof}

A polynomial $p \in \GF{}[x_1, \ldots, x_m]$ is \emph{non-zero}
if it contains at least one monomial (term), and
it represents a \emph{non-zero function} if there exists some
$a = (a_1, \ldots, a_m) \in \GF{}^m$ such that $p(a) \ne 0$.
A non-zero multivariate polynomial of degree~$d$ over~$\GF{}$
is guaranteed to be a non-zero function if $|\GF{}| > d$;
see, e.g.,~\cite[Lem.~3.1]{Moshkovitz10}.

\begin{observation}
\label{lem:syntax}
A polynomially ambiguous \NFA $A$ accepts $w$ precisely when the polynomial $A_{\GF{}[\vec x]}(w)$ is non-zero:
from \cref{lem:parikh} we see that different runs for a given input word evaluate to different monomials.
\end{observation}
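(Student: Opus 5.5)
We prove the observation in both directions by expanding the polynomial $A_{\GF{}[\vec x]}(w)$ as a sum over accepting runs. Unfolding \Cref{def:WTAT}, the value of the weighted automaton on $w = a_1\cdots a_n$ is $\lambda \Delta_{a_1}\cdots\Delta_{a_n}\gamma$. Here $\lambda$ and $\gamma$ are $0$--$1$ vectors (we assume a single initial state $q_0$, as justified just before \cref{lem:parikh}). The entries of $\Delta_a$ are the variables $x_t$ for transitions $t=(p,a,q)\in\delta$ and $0$ otherwise. Expanding the matrix product gives
\[
A_{\GF{}[\vec x]}(w) \;=\; \sum_{\pi} \; \prod_{t\in\delta} x_t^{P_\pi(t)},
\]
where $\pi$ ranges over the accepting runs of $A$ on $w$, viewed as transition sequences. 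Thus each run contributes exactly the monomial $\vec x^{P_\pi}$, with coefficient $1$, and this monomial is determined solely by the Parikh image of~$\pi$.

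For the direction ``$A$ rejects $w$ $\Rightarrow$ the polynomial is zero'': if there are no accepting runs, the sum is empty and the polynomial is $0$.

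For the converse, suppose $A$ accepts $w$, so the sum has at least one term. The only risk is cancellation. Over a field of characteristic~$2$ such as $\mathrm{GF}(2^k)$, two equal monomials add to zero. So we need that distinct runs yield distinct monomials. This is exactly the content of \cref{lem:parikh}. If two different accepting runs on $w$ had the same Parikh image, then $A$ would be exponentially ambiguous. That contradicts polynomial ambiguity, using the Weber--Seidl dichotomy recalled in the preliminaries: not polynomially ambiguous is the same as exponentially ambiguous. Hence the map $\pi\mapsto P_\pi$ is injective on accepting runs of $w$. The sum is therefore a sum of pairwise distinct monomials, each with coefficient $1\neq 0$, so the polynomial is non-zero.

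The only delicate point is ensuring \cref{lem:parikh} applies. It needs the single-initial-state normalisation. We must check that adding a fresh initial state preserves both polynomial ambiguity and the bijection between accepting runs. Each old run from an initial state $q$ corresponds to one new run beginning with the copied transition out of the fresh state, so run counts are unchanged. Distinct monomials in the new variables then remain distinct, so the argument above is unaffected by this normalisation.
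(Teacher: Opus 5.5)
Your argument is the paper's one-line justification made explicit: expand $A_{\GF{}[\vec x]}(w)$ as $\sum_\pi \vec x^{P_\pi}$ over accepting runs, and use \cref{lem:parikh} with polynomial ambiguity to make $\pi\mapsto P_\pi$ injective, so no monomial can cancel in characteristic~$2$. One caution about your last paragraph: normalising changes the variables, so what you prove is the statement for the single-initial-state automaton (which the paper tacitly assumes), not for the original one. Indeed, for the $2$-ambiguous NFA with $I=F=\{q_1,q_2\}$ and transitions $t_1=(q_1,a,q_2)$, $t_2=(q_2,a,q_1)$, the polynomial on $aa$ is $2x_{t_1}x_{t_2}=0$ in characteristic~$2$ although $aa$ is accepted; also, copies of transitions leaving different initial states can coincide, so your run correspondence is only surjective rather than a bijection, though that still preserves acceptance and polynomial ambiguity.
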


Recall that if $A$ is exponentially ambiguous, then it contains a useful state that has two cycles on the same input (see~\cite{WeberSeidl91}). These cycles can be permuted. This justifies the `polynomially ambiguous' assumption in \cref{lem:syntax}.

Next, we recall the Schwartz-Zippel lemma in the form required in this section.
We write $\Pr[\ldots]$ to denote the probability of an event.

\begin{lemma}[Schwartz-Zippel~\cite{Schwartz80}]\label{lem:SZ}
Let $\GF{}$ be a finite field and $p\in \GF{}[x_1,\dots,x_m]$ a polynomial of total degree $d$ representing a non-zero function.
If each $r_i$ is chosen independently and uniformly from $\GF{}$, then
$
\Pr\bigl[p(r_1,\dots,r_m)=0\bigr] \;\le\; \frac{d}{\size{\GF{}}}
$.
\end{lemma}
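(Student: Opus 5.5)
We prove the Schwartz--Zippel bound by induction on the number of variables~$m$. The only place the hypothesis ``represents a non-zero function'' enters is to guarantee that $p$ is a non-zero polynomial. In fact the argument below works for any non-zero polynomial of total degree~$d$. If $d \ge |\GF{}|$, the bound $d/|\GF{}| \ge 1$ is trivial, so we may also assume $d < |\GF{}|$ when convenient.

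\emph{Base case $m=1$.} A non-zero univariate polynomial of degree $d$ over a field has at most $d$ roots, because each root $r$ contributes a linear factor $(x-r)$ and $\GF{}[x]$ is a unique factorisation domain. Hence a uniformly random $r_1 \in \GF{}$ is a root with probability at most $d/|\GF{}|$.

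\emph{Inductive step.} Write $p$ as a polynomial in $x_1$ with coefficients in $\GF{}[x_2,\dots,x_m]$:
\[
p = \sum_{i=0}^{k} x_1^{i}\, q_i(x_2,\dots,x_m),
\]
where $k$ is the largest exponent with $q_k \neq 0$ as a polynomial. Then $q_k$ has total degree at most $d-k$. Let $E$ be the event $q_k(r_2,\dots,r_m)=0$. By the induction hypothesis,
\[
\Pr[E] \le \frac{d-k}{|\GF{}|}.
\]
Now condition on any fixed values $r_2,\dots,r_m$ for which $E$ fails. Then $p(x_1,r_2,\dots,r_m)$ is a non-zero univariate polynomial of degree exactly $k$. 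Since $r_1$ is independent of $r_2,\dots,r_m$, the base case gives
\[
\Pr\bigl[p=0 \,\big|\, \neg E\bigr] \le \frac{k}{|\GF{}|}.
\]
Combining the two cases,
\[
\Pr[p=0] \le \Pr[E] + \Pr\bigl[p=0 \,\big|\, \neg E\bigr] \le \frac{d-k}{|\GF{}|} + \frac{k}{|\GF{}|} = \frac{d}{|\GF{}|}.
\]

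The main subtlety is the induction hypothesis. It must be applied to $q_k$ viewed as a non-zero polynomial, not as a non-zero function, since $q_k$ may vanish everywhere as a function when $d-k \ge |\GF{}|$. For this reason the induction is stated for non-zero polynomials throughout. When $q_k$ has degree at least $|\GF{}|$ the bound for that step is trivial anyway, so the argument goes through. The hypothesis in the statement as given then implies non-zeroness and the claim follows.
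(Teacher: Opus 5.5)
Your proof is correct. The paper gives no proof of its own and simply cites Schwartz~\cite{Schwartz80}, whose argument is essentially the one you give: induction on the number of variables, isolating the leading coefficient $q_k$ in one variable and applying the univariate root bound to what remains. Your remark that non-zeroness as a polynomial already suffices is also right, and it makes the appeal to \cite{Moshkovitz10} in the proof of Lemma~\ref{lem:NFA_SZ} unnecessary.
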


\begin{remark}\label{rem:repeat}
We may repeat the random assignment of values to variables $r$ times, until $\left(\nicefrac{d}{|\GF{}|}\right)^r$ becomes smaller than some required error (assuming $d<|\GF{}|$).
\end{remark}

By abuse of notation, we denote by $A_{\GF{}}$ the weighted automaton over $\GF{}$, obtained by taking $A_{\GF{}[\vec x]}$ and replacing each variable weight $x_t$ with a random value from $\GF{}$. Next, we consider the relationship between whether $A$ accepts $w$ or not and the value of $A_{\GF{}}(w)$.

\begin{lemma}\label{lem:NFA_SZ}
Assume that \NFA $A$ is polynomially ambiguous, $\GF{}$ is a finite field, and $|\GF{}| > |w|$. If $A$ rejects $w$, then $A_{\GF{}}(w)=0$, and if $A$ accepts $w$, then $\Pr[A_{\GF{}}(w)=0] \le \frac{|w|}{|\GF{}|}$.
\end{lemma}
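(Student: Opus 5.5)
The plan is to view $A_{\GF{}}(w)$ as the evaluation of the polynomial $p = A_{\GF{}[\vec x]}(w)$ at a uniformly random point $\vec r \in \GF{}^{|\delta|}$, and then apply \cref{lem:syntax} together with \cref{lem:SZ}. First, we unfold the definition. Since $\lambda$ and $\gamma$ are $0$--$1$ vectors (with $\lambda$ supported on $I$ and $\gamma$ on $F$), the matrix product gives
\[
p(\vec x) \;=\; \sum_{\rho} \prod_{i=1}^{|w|} x_{\rho[i]},
\]
where $\rho$ ranges over the accepting runs of $A$ on $w$. Substituting the random values $r_t$ for the variables $x_t$ commutes with the sums and products defining the automaton's value, because evaluation $\GF{}[\vec x]\to\GF{}$ is a ring homomorphism. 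Hence $A_{\GF{}}(w) = p(\vec r)$.

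\textbf{Case: $A$ rejects $w$.} Then there are no accepting runs, so the sum is empty and $p = 0$ identically. It follows that $A_{\GF{}}(w) = 0$ for every choice of $\vec r$, which gives the deterministic half of the statement.

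\textbf{Case: $A$ accepts $w$.} By \cref{lem:syntax}, which relies on \cref{lem:parikh} and the assumption of polynomial ambiguity, distinct accepting runs have distinct Parikh images. They therefore contribute distinct monomials, each with coefficient $1$, so no cancellation can occur and $p$ is a non-zero polynomial. Every monomial has degree exactly $|w|$, so $p$ is homogeneous of total degree $d = |w|$. Since $|\GF{}| > |w| = d$, the fact cited before \cref{lem:syntax} shows that $p$ represents a non-zero function. \Cref{lem:SZ} then yields $\Pr[p(\vec r) = 0] \le |w|/|\GF{}|$, as required.

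The only step that needs care is the claim that the monomials do not cancel. The worry would be that the characteristic of $\GF{}$ might identify coefficients. This is ruled out because each monomial arises from exactly one run: injectivity follows from \cref{lem:parikh}, since a monomial determines the Parikh image of the run. So every coefficient is $1 \neq 0$ in any field. Everything else is routine bookkeeping.
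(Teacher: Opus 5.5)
Your proof is correct and is essentially the paper's argument. The rejecting case is the empty sum; in the accepting case you get non-zeroness of $A_{\mathbb{F}[\vec x]}(w)$ via \cref{lem:syntax}, use total degree $|w|<|\mathbb{F}|$ to get a non-zero function, and apply \cref{lem:SZ} to the evaluation at the random weights, with your explicit run-sum expansion and coefficient-$1$ no-cancellation remark just spelling out what the paper leaves implicit. One thing worth making explicit: the no-cancellation step uses the single-initial-state hypothesis of \cref{lem:parikh}, imposed without loss of generality just before it, and this is genuinely needed in characteristic $2$ (with $I=F=\{p,q\}$, $a$-transitions $p\to q$ and $q\to p$, and $w=aa$, the two accepting runs share a Parikh image and the polynomial is $2x_1x_2=0$).
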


\begin{proof}
 By construction, the weighted automaton $A_{\GF{}}$ sums the weights (from $\GF{}$) of all accepting runs of $A$ on input $w$. If $A$ has no accepting runs on $w$, then this sum is empty and $A_{\GF{}}(w)=0$ deterministically.

Next, assume that $A$ has at least one accepting run on $w$. Because $A$ is polynomially ambiguous, $A_{\GF{}[\vec x]}(w)$ is non-zero by \cref{lem:syntax}. Since $|\GF{}| > |w|$ and since $|w|$ is the degree of this polynomial, it represents a non-zero function~\cite[Lem.~3.1]{Moshkovitz10}.
The result now follows from the Schwartz-Zippel Lemma, noting that $A_{\GF{}}(w)$ is equal to the polynomial $A_{\GF{}[\vec x]}(w)$ evaluated by replacing $x_t$ with the value from $F$ assigned to $t$ when defining $A_{\GF{}}$. 
\end{proof}

In our setting, the field $\GF{}$ will be a finite field of cardinality $2^k$, for some value of $k$: Galois field $\mathrm{GF}(2^k)$. Next, we consider how to go from a weighted automaton  over $\mathrm{GF}(2^k)$ to XNFA.
Fix a basis of $\mathrm{GF}(2^k)$ as a $k$-dimensional vector space over $\mathrm{GF}(2)$. Multiplication by any fixed element from $\mathrm{GF}(2^k)$ is a $\mathrm{GF}(2)$-linear map and can therefore be represented by a $k\times k$ matrix over $\mathrm{GF}(2)$.

To extend the parity acceptance concept to $k$-dimensional field representations, we define a generalised weighted automaton model, which we refer to as  $k$-\XNFA. Although the definition can be used on any semiring, we only use it on $\mathrm{GF}(2)$, and thus we refer to these weighted automata as $k$-XNFA.

\begin{definition}\label{def:wa}
A \emph{$k$-\XNFA} $A$ is a weighted automaton over $\mathrm{GF}(2)$, but with a final weight matrix $\gamma \in \mathrm{GF}(2)^{Q \times k}$. The word \( w = a_1 a_2 \ldots a_l \in \Sigma^* \) is accepted if
$\lambda \cdot \Delta_{a_1} \cdot \Delta_{a_2} \cdots \Delta_{a_l} \cdot \gamma$ is a non-zero vector in $\mathrm{GF}(2)^{1\times k}$.
\end{definition}

Every $1$-\XNFA is, trivially, an \XNFA. We refer to the acceptance condition of a $k$-\XNFA as the \emph{generalised parity acceptance condition}.
We show how to simulate a weighted automaton over $\mathrm{GF}(2^k)$ by a $k$-\XNFA, with a small overhead on the state and transition count.

\begin{restatable}{lemma}{GFsimulation}
    \label{lem:simulate-general}
    Let $\GF{} = \mathrm{GF}(2^k)$, and $A_F$ be a weighted automaton over $\GF{}$ with $|Q|$ states and $|\delta|$ transitions, and 
with all entries in the initial and final vectors being 0 or 1.
One may construct a $k$-\XNFA $B$ with $k\cdot |Q|$ states and at most $k^2\cdot |\delta|$ transitions 
such that, for every~$w$, the row vector $B(w)$ equals the coordinate vector of $A_{\GF{}}(w)$ with respect to a chosen basis of~$\GF{}$.
This construction can be done in $\O{|\delta| \cdot \mathrm{poly}(k)}$ time.
\end{restatable}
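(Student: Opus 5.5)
The idea is to replace every field element by its $k\times k$ matrix over $\mathrm{GF}(2)$ and use the regular representation of $\mathrm{GF}(2^k)$. Fix a basis $b_1,\dots,b_k$ of $\GF{}$ over $\mathrm{GF}(2)$ and let $\isomorphism\colon \GF{}\to\mathrm{GF}(2)^{1\times k}$ be the coordinate map, which is a $\mathrm{GF}(2)$-linear isomorphism. For $c\in\GF{}$ let $M_c\in\mathrm{GF}(2)^{k\times k}$ be the matrix of right multiplication $y\mapsto yc$ in row-vector convention, so that $\isomorphism(yc)=\isomorphism(y)\,M_c$. Because multiplication in $\GF{}$ is bilinear and associative, the map $c\mapsto M_c$ is a ring homomorphism into $k\times k$ matrices: $M_{c+c'}=M_c+M_{c'}$, $M_{cc'}=M_cM_{c'}$, $M_0=0$ and $M_1=I_k$. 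Commutativity of $\GF{}$ means we do not have to worry about the order of the factors.

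The automaton $B$ has state set $Q\times\{1,\dots,k\}$, so $k|Q|$ states. Its transition matrix for a letter $\sigma$ is the block matrix $\widehat{\Delta}_\sigma$ whose $(p,q)$ block is $M_{\Delta_\sigma[p][q]}$. Each nonzero entry of $A_F$ contributes at most $k^2$ nonzero entries, giving at most $k^2|\delta|$ transitions. For the initial vector, set the block at $q$ to $\isomorphism(\lambda[q])$, which is either $0$ or $\isomorphism(1)$. For the final weight, define $\gamma_B\in\mathrm{GF}(2)^{(Q\times[k])\times k}$ with $(q,\cdot)$ block equal to $M_{\gamma[q]}$, which is $0$ or $I_k$. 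The construction costs $O(|\delta|\cdot\mathrm{poly}(k))$: computing each $M_c$ takes $k$ field multiplications $b_i c$, each in $\mathrm{poly}(k)$ time given an irreducible polynomial defining $\GF{}$, followed by coordinate extraction.

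For correctness, I would prove by induction on $|u|$ the invariant that, for every prefix $u$ and every state $q$, the $q$-block of $\lambda_B\widehat{\Delta}_u$ equals $\isomorphism\bigl((\lambda\Delta_u)[q]\bigr)$, where $\Delta_u=\Delta_{a_1}\cdots\Delta_{a_l}$. The base case $u=\varepsilon$ holds by the choice of $\lambda_B$. For the step, the $q$-block of $\lambda_B\widehat{\Delta}_u\widehat{\Delta}_\sigma$ equals
\[
\sum_p \isomorphism\bigl((\lambda\Delta_u)[p]\bigr)\,M_{\Delta_\sigma[p][q]}=\isomorphism\Bigl(\sum_p(\lambda\Delta_u)[p]\,\Delta_\sigma[p][q]\Bigr),
\]
using $\isomorphism(y)M_c=\isomorphism(yc)$ and linearity of $\isomorphism$. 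Multiplying by $\gamma_B$ then gives $\sum_q\isomorphism\bigl((\lambda\Delta_w)[q]\bigr)M_{\gamma[q]}=\isomorphism(A_F(w))$, which is the required coordinate vector $B(w)$.

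There is no real obstacle here. The only step needing care is keeping the row- versus column-vector conventions consistent, so that $M_c$ is defined as right multiplication and the block products line up. The second point is the cost of field arithmetic in $\mathrm{GF}(2^k)$, which we assume is given by a fixed irreducible polynomial of degree $k$ so that each operation takes $\mathrm{poly}(k)$ time.
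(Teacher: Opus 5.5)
Your proposal is correct and follows essentially the same route as the paper. The paper replicates each state into $k$ coordinate states, labels each transition with the $\mathrm{GF}(2)$-matrix of multiplication by its weight (your right-multiplication matrix in row convention is exactly the paper's $M_g^{\mathsf T}$), expands $\lambda$ coordinatewise, uses $I_k$ or $0$ blocks for $\gamma$, and proves the same blockwise invariant by induction on the prefix length. The one minor difference is that the paper does not assume a defining irreducible polynomial is given: it notes that one of degree $k$ can be found in $\mathrm{poly}(k)$ time, which keeps the total within $O(|\delta|\cdot\mathrm{poly}(k))$.
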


The construction is based on the isomorphism $\mathrm{GF}(2^k) \cong \mathrm{GF}(2)^k$ as vector spaces over $\mathrm{GF}(2)$. The $k$-\XNFA $B$ simulates the evaluation of $A_{\GF{}}$ by replacing each state of a weighted automaton over $\mathrm{GF}(2^k)$ with a vector of $k$ states. Each transition with weight $r$ is replaced by the corresponding linear update (matrix multiplication over $\mathrm{GF}(2)$) implementing $g\mapsto r\cdot g$ in $\mathrm{GF}(2^k)$.
The details of the construction are relegated to the appendix.

A $k$-\XNFA can be split into $k$ ordinary XNFA with essentially no overhead:

\begin{restatable}[$k$-\XNFA decomposition]{lemma}{XNFAdecomposition}
\label{lem:decompose}
For a $k$-\XNFA $A = (Q, \Sigma, \Delta, \lambda, \gamma)$, consider $k$ XNFA
$A_i = (Q, \Sigma, \Delta, \lambda, \gamma_i), i = 1, \ldots, k$,
where $\gamma_i$ denotes the $i$th column of $\gamma$. Then $A$ accepts $w$ iff at least one of the XNFA $A_i$ accepts $w$.
\end{restatable}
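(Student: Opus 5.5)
The plan is to compute directly from \Cref{def:wa} and the definition of an ordinary \XNFA in \Cref{def:WTAT}, using only that matrix multiplication commutes with taking columns. Fix a word $w = a_1 \cdots a_l$ and set
$v = \lambda \cdot \Delta_{a_1} \cdots \Delta_{a_l} \in \mathrm{GF}(2)^{1 \times Q}$.
By \Cref{def:wa}, $A$ accepts $w$ iff the row vector $v \cdot \gamma \in \mathrm{GF}(2)^{1\times k}$ is non-zero.

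The key step is to identify each coordinate of this vector. By the definition of matrix multiplication, the $i$th entry of $v \cdot \gamma$ is $v \cdot \gamma_i$, where $\gamma_i \in \mathrm{GF}(2)^{Q \times 1}$ is the $i$th column of $\gamma$. Since $A_i$ shares $\lambda$ and $\Delta$ with $A$, its value is $A_i(w) = \lambda \cdot \Delta_{a_1} \cdots \Delta_{a_l} \cdot \gamma_i = v \cdot \gamma_i$. Hence
$v\cdot\gamma = \bigl(A_1(w), \ldots, A_k(w)\bigr)$.

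The conclusion follows at once: a vector over $\mathrm{GF}(2)$ is non-zero iff some coordinate equals $1$. So $A$ accepts $w$ iff $A_i(w) = 1$ for some $i$, which is exactly the statement that some $A_i$ accepts $w$ under the parity condition. The empty word is the case $l = 0$, where $v = \lambda$, and needs no separate treatment.

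There is no real obstacle here. The only point needing care is that associativity lets us compute the common prefix product $v$ once and read off each $A_i(w)$ from it. I would also remark that each $A_i$ reuses the states and transitions of $A$ unchanged, which justifies the claim of essentially no overhead.
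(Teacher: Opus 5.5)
Your proposal is correct and follows essentially the same argument as the paper: you identify the $i$th coordinate of $\lambda \cdot \Delta_{a_1} \cdots \Delta_{a_l} \cdot \gamma$ with $A_i(w)$, and then use that a vector over $\mathrm{GF}(2)$ is non-zero iff some coordinate is non-zero. Your remarks on the empty word and on the shared states and transitions are harmless additions; the paper makes the same observation about shared structure.
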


\begin{theorem}
\label{thm:main}
Let $A$ be a polynomially ambiguous \NFA with $|Q|$ states, $|\delta|$ transitions, and let $w\in\Sigma^+$.
Let $k >  \log_2 |w|$.
There exists an algorithm that produces $k$ XNFA $A_i$ from $A$, all identical except for their sets of accepting states, such that:
\begin{enumerate}
\item If $A$ rejects $w$, then $A_i$ rejects $w$ for all $i$, with probability $1$.
\item If $A$ accepts $w$, then $A_{i}$ accepts $w$ for at least one $i$, with probability at least $1-\frac{|w|}{2^k}$.
\item The number of states and transitions in each $A_i$ is $k\cdot |Q|$ and at most $k^2\cdot |\delta|$, respectively.
\item The algorithm runs in time $O(|\delta|\cdot \mathrm{poly}(k))$.
\end{enumerate}
\end{theorem}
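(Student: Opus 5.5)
The plan is to compose the three ingredients already established: random instantiation over $\mathrm{GF}(2^k)$ (\cref{lem:NFA_SZ}), simulation of $\mathrm{GF}(2^k)$-weighted automata by a $k$-\XNFA (\cref{lem:simulate-general}), and decomposition of a $k$-\XNFA into $k$ ordinary \XNFA (\cref{lem:decompose}).

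The algorithm first constructs $\GF{} = \mathrm{GF}(2^k)$. It fixes an irreducible polynomial of degree $k$ over $\mathrm{GF}(2)$, which can be found in $\mathrm{poly}(k)$ time, possibly randomised; this is also where the basis for \cref{lem:simulate-general} comes from. Next it builds $A_{\GF{}}$ from $A$ by assigning to each transition $t\in\delta$ an independent uniformly random element $r_t\in\GF{}$. Initial and final weights stay $0$/$1$, as \cref{lem:simulate-general} requires. This step costs $O(|\delta|\cdot k)$ random bits and time. It then applies \cref{lem:simulate-general} to $A_{\GF{}}$, producing a $k$-\XNFA $B$ with $k|Q|$ states and at most $k^2|\delta|$ transitions in time $O(|\delta|\cdot\mathrm{poly}(k))$. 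Finally, \cref{lem:decompose} yields $A_1,\dots,A_k$, which share $Q$, $\Delta$ and $\lambda$ and differ only in the final column $\gamma_i$. Items~3 and~4 follow at once, since writing down the $k$ copies, or a single shared structure with $k$ final sets, costs at most $k\cdot k^2|\delta|$, which is still $|\delta|\cdot\mathrm{poly}(k)$.

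Correctness comes from chaining equivalences. By \cref{lem:decompose}, some $A_i$ accepts $w$ iff the row vector $B(w)$ is non-zero. By \cref{lem:simulate-general}, $B(w)$ is the coordinate vector of $A_{\GF{}}(w)$, so it is non-zero iff $A_{\GF{}}(w)\neq 0$, because coordinates with respect to a basis form a linear isomorphism. Then \cref{lem:NFA_SZ} applies, using $|\GF{}| = 2^k > |w|$ from $k>\log_2|w|$, together with polynomial ambiguity. If $A$ rejects $w$, then $A_{\GF{}}(w)=0$ deterministically, which gives item~1. If $A$ accepts $w$, then $\Pr[A_{\GF{}}(w)=0]\le |w|/2^k$, which gives item~2.

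The main obstacle is essentially bookkeeping. One has to make sure the hypotheses of \cref{lem:simulate-general} hold, namely that the initial and final weights are in $\{0,1\}$; this holds because only transition weights are randomised. One also has to account for the field arithmetic (computing the $k\times k$ multiplication matrices of each $r_t$) within the $\mathrm{poly}(k)$ factor. A minor point is the single-initial-state assumption of \cref{lem:parikh}. It only adds one state and is absorbed by the preprocessing, or is unnecessary because \cref{lem:syntax} is used only through \cref{lem:NFA_SZ}.
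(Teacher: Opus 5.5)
Your proposal is correct and follows the paper's proof essentially step for step. You randomise transition weights over $\mathrm{GF}(2^k)$, apply \cref{lem:simulate-general} and then \cref{lem:decompose}, get items~1--2 from \cref{lem:NFA_SZ}, and get items~3--4 from the size and time bounds of the lemmas.

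One correction to your last remark. The single-initial-state assumption is not dispensable, because \cref{lem:NFA_SZ} itself rests on \cref{lem:syntax} and hence on \cref{lem:parikh}. For example, take two states $p,q$ that are both initial and final, with transitions $p\xrightarrow{a}q$ and $q\xrightarrow{a}p$ of weights $r_1,r_2$. This NFA is $2$-ambiguous and accepts $aa$, yet $A_{\mathbb{F}}(aa)=r_1r_2+r_2r_1=0$ in characteristic~$2$. So you must use the preprocessing option.
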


\begin{proof}
Set $\GF{} = \mathrm{GF}(2^k)$ so that $|\GF{}| = 2^k > |w|$. \Cref{alg:nfa-to-xnfa} (page~\pageref{alg:nfa-to-xnfa}) has three steps.

\smallskip\noindent\textbf{\textsf{Step 1 (Randomisation).}}
Sample independently and uniformly a weight from $\GF{}$ for each of the $|\delta|$ transitions of $A$, and let $A_{\GF{}}$ be the resulting weighted automaton over $\GF{}$ (as defined before \cref{lem:NFA_SZ}).

\smallskip\noindent\textbf{\textsf{Step 2 (Simulation).}}
Apply \cref{lem:simulate-general} to $A_{\GF{}}$ to obtain a $k$-\XNFA $B$ with $k\cdot |Q|$ states and at most $k^2\cdot |\delta|$ transitions such that $B$ accepts $w$ iff $A_{\GF{}}(w)\ne 0$.

\smallskip\noindent\textbf{\textsf{Step 3 (Decomposition).}}
Apply \cref{lem:decompose} to $B$ to obtain $k$ XNFA $A_1, \ldots, A_k$. By the lemma, they share the state set and transitions of $B$, differ only in their accepting states, and $B$ accepts $w$ iff at least one $A_i$ accepts $w$.

\smallskip
We now verify each part of the theorem.

\emph{Parts~(1) and (2): correctness.}
Since $|\GF{}| = 2^k > |w|$, \cref{lem:NFA_SZ} applies. If $A$ rejects $w$, then $A_{\GF{}}(w) = 0$ deterministically; by \cref{lem:simulate-general}, $B$ rejects $w$; by \cref{lem:decompose}, every $A_i$ rejects $w$. If $A$ accepts $w$, then $A_{\GF{}}(w)\ne 0$ with probability at least $1 - |w|/|\GF{}| = 1 - |w|/2^k$; on this event, \cref{lem:simulate-general} gives that $B$ accepts $w$, and \cref{lem:decompose} gives that at least one $A_i$ accepts $w$.

\emph{Part~(3): size.}
By \cref{lem:simulate-general}, $B$ has $k\cdot |Q|$ states and at most $k^2\cdot |\delta|$ transitions. By \cref{lem:decompose}, each $A_i$ inherits the same state set and transitions.

\emph{Part~(4): running time.}
Step~1 samples $|\delta|$ elements of $\GF{}$, each of $k$ bits, in time $O(|\delta|\cdot k)$. Step~2 runs in time $O(|\delta|\cdot \mathrm{poly}(k))$ by \cref{lem:simulate-general}. Step~3 only duplicates pointers to $B$ while restricting the final weight vector, so it runs in time linear in the output size. The total running time is $O(|\delta|\cdot \mathrm{poly}(k))$.
\end{proof}

Directly from the second part of the previous theorem, we obtain the following results:

\begin{corollary}\label{cor:half}
With the assumptions as in \cref{thm:main}, suppose
$k \ge \left\lceil \log_2 |w|\right\rceil + 1$.
If $A$ accepts $w$, then at least one of $A_1, \dots , A_k$ accepts
$w$ with probability at least $\nicefrac{1}{2}$.
\end{corollary}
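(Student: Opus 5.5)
This corollary follows by plugging the chosen value of $k$ into part~(2) of \cref{thm:main}. First, I will confirm that the hypothesis of \cref{thm:main} holds. Setting $k \ge \lceil \log_2 |w| \rceil + 1$ gives $k > \log_2 |w|$, because $\lceil \log_2 |w| \rceil \ge \log_2 |w|$ and we add~$1$. So the theorem applies, and it produces XNFA $A_1,\dots,A_k$ with the four listed properties.

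Next, I will bound the failure probability in part~(2). From the choice of $k$,
\[
2^k \ge 2^{\lceil \log_2 |w| \rceil + 1} \ge 2 \cdot 2^{\log_2 |w|} = 2|w|,
\]
which uses $|w| \ge 1$ because $w \in \Sigma^+$. Therefore $|w|/2^k \le 1/2$. Part~(2) of \cref{thm:main} then says that if $A$ accepts $w$, at least one $A_i$ accepts $w$ with probability at least $1 - |w|/2^k \ge 1/2$. This is exactly the claim.

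No step here is a real obstacle. The only points to check are that the ceiling and the extra~$+1$ give the strict inequality required by \cref{thm:main} and the factor of $2$ in $2^k \ge 2|w|$. Both are elementary, and all the probabilistic content is already handled by \cref{thm:main}, which rests on \cref{lem:NFA_SZ}.
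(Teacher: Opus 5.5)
Your proposal is correct and follows the paper's own route: the paper derives this corollary directly from part~(2) of \cref{thm:main}. Your checks that $k > \log_2 |w|$ and $2^k \ge 2|w|$ (hence $|w|/2^k \le \nicefrac{1}{2}$) are exactly the arithmetic the paper leaves implicit.
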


\begin{algorithm}
\caption{Reduction from polynomially ambiguous NFA Acceptance to XNFA Acceptance}
\label{alg:nfa-to-xnfa}
\begin{algorithmic}[1]
\Require polynomially ambiguous NFA $A=(Q,\Sigma,\delta,\lambda,\gamma)$ with $0/1$ initial and
         final vectors, word $w\in\Sigma^{+}$, integer $k>\log_2|w|$.
\Ensure XNFA $A_1,\dots,A_k$, identical except for
        their final vectors, such that: if $A$ rejects $w$ then every $A_i$ rejects $w$,
        and if $A$ accepts $w$ then with probability
        $\ge 1 - |w|/2^{k}$ at least one $A_i$ accepts $w$.
\State $\triangleright$ \emph{Step 0: fix a representation of $\mathbb{F}=\mathrm{GF}(2^{k})$}
\State compute an irreducible $f(y)\in\mathrm{GF}(2)[y]$ of degree $k$, so that
       $\mathbb{F}\cong\mathrm{GF}(2)[y]/\langle f\rangle$ with basis $1,y,\dots,y^{k-1}$
\State let $\mathrm{toVec}\colon\mathbb{F}\to\mathrm{GF}(2)^{k}$ be the coordinate map
\State for $g\in\mathbb{F}$
       let $M_g\in\mathrm{GF}(2)^{k\times k}$ be the matrix of $h\mapsto g\cdot h$ in this basis
\State $\triangleright$ \emph{Step 1: randomisation}
\For{\textbf{each} transition $t=(p,\sigma,q)\in\delta$}
    \State sample $r_t\in\mathbb{F}$ independently and uniformly at random
\EndFor
\State $\triangleright$ \emph{Step 2: simulation -- build $k$-XNFA $B=(Q',\Sigma,\Delta',\lambda',\gamma')$}
\State $Q' \gets \{\,(q,1),\dots,(q,k) : q \in Q\,\}$
       \Comment{$k|Q|$ states}
       \State $S_q \gets \{(q,1),\dots,(q,k)\}$
\For{\textbf{each} $\sigma\in\Sigma$}
    \State $\Delta'_{\sigma}\gets \mathbf{0}\in\mathrm{GF}(2)^{k|Q|\times k|Q|}$
    \For{\textbf{each} transition $t=(p,\sigma,q)\in\delta$}
        \State set the $S_p \times S_q$ block of $\Delta'_{\sigma}$ to $M_{r_t}^{\mathsf{T}}$
    \EndFor
\EndFor
\State $\lambda'\gets \mathbf{0}\in\mathrm{GF}(2)^{1\times k|Q|}$
\For{\textbf{each} $q\in Q$ with $\lambda_q=1$}
    \State set the $S_q$ block of $\lambda'$ to $\mathrm{toVec}(1)=(1,0,\dots,0)$
\EndFor
\State $\gamma' \gets \mathbf{0}\in\mathrm{GF}(2)^{k|Q|\times k}$
\For{\textbf{each} $q\in Q$ with $\gamma_q=1$}
    \State set the $q$-th $k\times k$ block of $\gamma'$ to $I_k$
\EndFor
\State $\triangleright$ \emph{Step 3: decomposition into $k$ ordinary XNFA}
\For{$i\gets 1$ \textbf{to} $k$}
    \State $A_i\gets(Q',\Sigma,\Delta',\lambda',\gamma'^{(i)})$, where $\gamma'^{(i)}$ is the $i$-th column of $\gamma'$
\EndFor
\State \Return $A_1,\dots,A_k$
\end{algorithmic}
\end{algorithm}

\begin{corollary}\label{cor:reduction-finegrained}
Suppose that, for some real $\varepsilon > 0$, there is an algorithm that
decides XNFA Acceptance for every XNFA $C$ with transition relation
$\delta^{(C)}$ and every input word $u$ in time
$O\!\left(\bigl(|\delta^{(C)}|\cdot|u|\bigr)^{1-\varepsilon}\right)$.
Then the acceptance problem for a polynomially ambiguous NFA $A$ with
transition relation $\delta$ and input word $w$ admits a randomised algorithm
with one-sided error of at most $\tfrac{1}{2}$, running in time
$\widetilde{O}\!\left(|\delta| + |w| + \bigl(|\delta|\cdot|w|\bigr)^{1-\varepsilon}\right)$,
where $\widetilde{O}$ hides factors polylogarithmic in $|w|$.
\end{corollary}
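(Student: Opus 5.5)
The plan is to compose \Cref{thm:main} with the hypothetical fast XNFA Acceptance algorithm, using \cref{cor:half} to fix the parameter. Set $k = \lceil \log_2 |w| \rceil + 1$, so that $k > \log_2 |w|$ and $k = O(\log |w|)$. Before running the reduction, I would preprocess $A$ by adding a single fresh initial state if needed, as the paper allows. This costs $O(|\delta|)$ time. Then I run \Cref{alg:nfa-to-xnfa} to obtain XNFA $A_1, \dots, A_k$. By part~(4) of \Cref{thm:main}, this takes $O(|\delta| \cdot \mathrm{poly}(k)) = \widetilde{O}(|\delta|)$ time. Reading $w$ accounts for the additive $|w|$ term.

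Next, I run the assumed algorithm on each pair $(A_i, w)$ and accept iff some $A_i$ accepts. For correctness, part~(1) of \Cref{thm:main} shows that if $A$ rejects $w$, then every $A_i$ rejects $w$, so the error is one-sided. \Cref{cor:half} shows that if $A$ accepts $w$, then with probability at least $\nicefrac12$ some $A_i$ accepts $w$. Together these give the claimed error bound.

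For the running time, part~(3) of \Cref{thm:main} gives $|\delta^{(A_i)}| \le k^2 |\delta|$. Each call therefore costs
\[
O\bigl((k^2|\delta|\cdot|w|)^{1-\varepsilon}\bigr) \le O\bigl(k^2 (|\delta|\cdot|w|)^{1-\varepsilon}\bigr),
\]
and the $k$ calls together cost $O\bigl(k^3(|\delta|\cdot|w|)^{1-\varepsilon}\bigr)$. Since $k = O(\log|w|)$, this is $\widetilde{O}\bigl((|\delta|\cdot|w|)^{1-\varepsilon}\bigr)$. Summing with the reduction cost gives the stated bound.

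The main subtlety I expect is an input-representation issue. The hypothetical algorithm's input size counts $|\delta^{(C)}|$, but the automata $B$ and $A_i$ may have up to $k|Q|$ states. I would argue that only states incident to transitions, or that are initial or final, matter. After trimming, $|Q| = O(|\delta| + 1)$, so this changes nothing beyond constants. A second point is that the $k$ automata must not be written out separately at a cost of $k$ times their size. Since the bound still contains only a $\mathrm{poly}(k)$ factor, however, even explicit copying is $\widetilde{O}(|\delta|)$.
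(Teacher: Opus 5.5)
Your proposal follows the paper's proof essentially step for step: choose $k=\lceil\log_2|w|\rceil+1$, apply \Cref{thm:main} together with \cref{cor:half}, call the assumed algorithm on each of the $k$ XNFA, and absorb the $\mathrm{poly}(k)$ factors into $\widetilde{O}$ (your $k^3$ plays the role of the paper's $k\cdot(k^2)^{1-\varepsilon}$). The one thing to add is the paper's separate handling of $w=\varepsilon$, where your $k$ and \Cref{thm:main} are undefined but acceptance is read off directly from the initial and final states; your explicit single-initial-state preprocessing is a good extra, since \cref{lem:parikh}, and hence the Schwartz--Zippel step, depends on it.
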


\begin{proof}
We may assume $w \in \Sigma^{+}$, as the case $w = \varepsilon$ is decided
directly from the initial and final states. Set $k = \lceil \log_2 |w| \rceil + 1$
so that $k = \Theta(\log |w|)$ and $|w|/2^{k} \le \tfrac{1}{2}$.
By Theorem~\ref{thm:main} we construct, in time $O(|\delta|\cdot\mathrm{poly}(k))$,
the $k$ XNFA $A_1,\dots,A_k$, each with at most $k^{2}\cdot|\delta|$ transitions,
such that (i) if $A$ rejects $w$ then every $A_i$ rejects $w$, and (ii) if $A$
accepts $w$ then at least one $A_i$ accepts $w$ with probability at least
$\tfrac{1}{2}$ (Corollary~\ref{cor:half}). Running the assumed XNFA Acceptance
algorithm on each $A_i$ with input $w$ therefore decides the acceptance of $A$ on $w$
with one-sided error at most $\tfrac{1}{2}$.

For the running time, each of the $k$ calls costs
$O\!\left((k^{2}|\delta|\cdot|w|)^{1-\varepsilon}\right)$, writing the $k$ input
words costs $O(|w|\,k)$, and the construction costs $O(|\delta|\cdot\mathrm{poly}(k))$.
The total is
\[O\!\left(|\delta|\cdot\mathrm{poly}(k) + |w|\,k
        + k\cdot\bigl(k^{2}|\delta|\cdot|w|\bigr)^{1-\varepsilon}\right).\]
Since $k = \Theta(\log |w|)$, the factors $\mathrm{poly}(k)$ and
$k\cdot(k^{2})^{1-\varepsilon}$ are polylogarithmic in $|w|$ and are absorbed into
$\widetilde{O}$, giving
$\widetilde{O}\!\left(|\delta| + |w| + (|\delta|\cdot|w|)^{1-\varepsilon}\right)$.
The general error bound $\eta\in(0,1)$ is treated in
Corollary~\ref{cor:reduction-general} (appendix).
\end{proof}

\begin{remark}
    The above reduction from polynomially ambiguous NFA Acceptance to XNFA Acceptance can be easily generalised to reducing polynomially ambiguous NFA Acceptance to the value problem for weighted automata over other finite fields. Since the strength of our reduction is that even the simplest field, GF(2), can efficiently simulate polynomially ambiguous NFA Acceptance, we state our results in terms of XNFA Acceptance. 
\end{remark}

\section{Certificates for Emptiness and Equivalence for weighted automata over fields}
\label{sec:certificates_for_the_acceptance_and_value_problems}

In this section, we describe certification schemes for the (non)emptiness and (non)equivalence problems for weighted automata over fields.
Intuitively, given a decision problem $D$ and an instance $x$, a certificate is a string that can be used to prove that $D$ accepts $x$. We say that we can \emph{certify} a decision problem $D$ in time $T(n)$ if there exists a deterministic algorithm (\emph{verifier}) that takes as input pairs $(x,c)$ where $x$ is an instance of $D$ and $c$ is any string. The verifier must have the following properties: 
\begin{enumerate}[(1)]
    \item If $x$ is accepted by $D$, there must exist a string $c$ (\emph{certificate}) of size $|c| \leq T(|x|)$ such that the verifier accepts ($x,c$) in $T(|x|)$ time. 
    \item If $x$ is rejected by $D$ then, for all strings $c$, the verifier must reject $(x,c)$ in time $T(|x|)$.
\end{enumerate}  
This is the same as saying there is a nondeterministic $T(n)$-time algorithm for $D$, viewing $c$ as the `guess' and the verifier as the `check' of the nondeterministic algorithm.

For a field $\mathbb{F}$ or semiring $\mathbb{K}$, we denote by  $\mathrm{op}(\mathbb{F})$ or $\mathrm{op}(\mathbb{K})$, respectively, the time required to perform arithmetic operations in it.

Our certificate schemes build on existing algorithms for deciding emptiness and equivalence. In order to certify (non)emptiness of weighted automata, our first ingredient is the certificate scheme for the NFA \emph{acceptance} problem from~\cite[Sec.~7.3]{Bringmann24}. This scheme generalises to certifying whether the value of a weighted automata over a semiring $\mathbb{K}$ on a word is zero or not, as follows. 

\Cref{tab:section6} summarises the complexity of certification for evaluation of weighted automata, and the emptiness, non-emptiness, and equivalence problems for weighted automata over fields. We also highlight the special cases of NFA and XNFA for the same. The bounds for NFA emptiness and non-emptiness certification (marked with a $^\dagger$ in the table) are folklore. 

\begin{table*}[t]
  \centering
  \caption{Certificate-verification time bounds of Section~\ref{sec:certificates_for_the_acceptance_and_value_problems}
  instantiated for general weighted automata (WA), NFA, and XNFA.
  Here $n=|Q|$, $\omega \le 2.38$ is the matrix-multiplication exponent, and $|w|$ is the input length for the value problem. }
  \label{tab:section6}
  \begin{threeparttable}
  \footnotesize
  \setlength{\tabcolsep}{5pt}
  \renewcommand{\arraystretch}{1}
  \begin{tabular}{@{}lccc@{}}
    \toprule
     & \clap{WA over semiring $\bK$ or field $\bF$} & NFA $(\mathbb{B})$ & XNFA $(\mathrm{GF}(2))$ \\
    \midrule
    \makecell[l]{\textbf{Value}\\[-3pt]\scriptsize(\cref{obs:cert-acc-rej})}
      & \makecell[c]{$\O{|\Sigma|(|w|\,n^{\omega-1}{+}n^{\omega})\,\mathrm{op}(\mathbb{K})}$}
      & $\O{|\Sigma|(|w|\,n^{\omega-1}{+}n^{\omega})}$
      & $\O{|\Sigma|(|w|\,n^{\omega-1}{+}n^{\omega})}$ \\
      \midrule
    \makecell[l]{\textbf{Emptiness}} &&& \\
    -- \rlap{det. \scriptsize(\cref{thm:emp})}
      & \makecell[c]{$\O{|\Sigma|\,n^{\omega}\,\mathrm{op}(\mathbb{F})}$}
      & $\O{n+|\delta|}^\dagger$ 
      & $\O{|\Sigma|\,n^{\omega}}$ \\\addlinespace[1pt]
      \cmidrule[\lightrulewidth](l{7.5pt}){1-4}\addlinespace[1pt]
    \makecell[l]{-- rand. \scriptsize(\cref{rem:randomised-remark})}
      & \makecell[c]{$\O{|\Sigma|\,n^{2}\,\mathrm{op}(\mathbb{F})}$}
      & ---
      & $\O{|\Sigma|\,n^{2}}$ \\[2pt]  
      \midrule
    \makecell[l]{\textbf{Non-emptiness} \\[-3pt]\scriptsize(\cref{thm:nonemp})}
      & \makecell[c]{$\O{|\Sigma|\,n^{\omega}\,\mathrm{op}(\mathbb{F})}$}
      & $\O{n \log(\delta)} ^\dagger$ 
      & $\O{|\Sigma|\,n^{\omega}}$ \\
    \midrule
    \makecell[l]{\textbf{Equivalence} \\[-3pt]\scriptsize(\cref{cor:equiv})}
      & \makecell[c]{$\O{|\Sigma|\,n^{\omega}\,\mathrm{op} (\mathbb{F})}$}
      & \textsc{PSpace}-complete~\cite{JiangR93}
      & $\O{|\Sigma|\,n^{\omega}}$ \\
    \bottomrule
  \end{tabular}
  \end{threeparttable}
\end{table*}

\begin{observation}
\label{obs:cert-acc-rej}
    Given a weighted automaton $A$ over a semiring $\mathbb{K}$ with $n$ states over an alphabet $\Sigma$, and a word $w$, we can certify whether the value of $A$ on $w$ is $A(w)$ in $O(|\Sigma| \, (|w|\,  n^{\omega-1} + n^{\omega}) \cdot  \mathrm{op}(\mathbb{K}))$ time, where $\omega$ is the matrix multiplication exponent for $\mathbb{K}$. 
\end{observation}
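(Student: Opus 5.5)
The certificate records the intermediate state vectors along the run, and the verifier checks all the one-step transitions at once by grouping them by letter into a few matrix products. This generalises the NFA acceptance certificate of~\cite[Sec.~7.3]{Bringmann24}.

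Let $w = a_1 \cdots a_m$ with $m = |w|$. The certificate consists of the claimed row vectors
\[
v_0 = \lambda, \qquad v_i = \lambda \cdot \Delta_{a_1} \cdots \Delta_{a_i}\ (1 \le i \le m),
\]
each in $\mathbb{K}^{1\times Q}$, together with the claimed value $v_m \cdot \gamma$. Its size is $O(m\,n)$ semiring elements. Since the value is uniquely determined, a single scheme certifies the statement ``$A(w)=c$'', and hence both $A(w)=0$ and $A(w)\neq 0$.

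The verifier checks $v_0=\lambda$ and computes $v_m\gamma$ directly in $O(n)$ operations. It must then check $v_i = v_{i-1}\Delta_{a_i}$ for every $i$; by induction these checks hold iff every $v_i$ is the true prefix vector. Checking them one at a time costs $\Theta(m n^2)$, so instead the checks are batched by letter. For each $\sigma\in\Sigma$, let $S_\sigma = \{i : a_i = \sigma\}$. Stack the rows $v_{i-1}$ for $i\in S_\sigma$ into a matrix $V_\sigma$ of size $|S_\sigma|\times n$, and the rows $v_i$ for $i \in S_\sigma$ into $V'_\sigma$. Then all checks for letter $\sigma$ are equivalent to the single matrix identity $V_\sigma \Delta_\sigma = V'_\sigma$.

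To compute $V_\sigma\Delta_\sigma$, split $V_\sigma$ into $\lceil |S_\sigma|/n\rceil$ blocks of at most $n$ rows each. Each block is multiplied by $\Delta_\sigma$ as an $n\times n$ product (padding with zero rows if needed), in $O(n^\omega)$ semiring operations. The cost for letter $\sigma$ is therefore $O((|S_\sigma|/n + 1)\,n^\omega) = O(|S_\sigma|\,n^{\omega-1} + n^\omega)$. Summing over $\sigma$ and using $\sum_\sigma |S_\sigma| = m$ gives $O(|w|\,n^{\omega-1} + |\Sigma|\,n^\omega)$ operations, which is within the stated bound $O(|\Sigma|(|w|n^{\omega-1}+n^\omega))$. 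The comparison with $V'_\sigma$ and reading the certificate add only $O(mn)\le O(m n^{\omega-1})$. Multiplying by $\mathrm{op}(\mathbb{K})$ gives the time bound.

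The main obstacle is making sure fast matrix multiplication is legitimately available over $\mathbb{K}$. Over a general semiring Strassen-type algorithms need subtraction, so they do not apply directly. The statement handles this by defining $\omega$ as the matrix multiplication exponent for $\mathbb{K}$. For $\mathbb{B}$ one can embed into integer multiplication, and for fields such as $\mathrm{GF}(2)$ the usual $\omega$ applies.
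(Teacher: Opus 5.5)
Your proposal is correct and takes the same approach as the paper: the certificate lists the prefix configurations, and the verifier checks all one-step updates by grouping them by letter into matrix--matrix products against $\Delta_\sigma$. Your explicit splitting of each $V_\sigma$ into $\lceil |S_\sigma|/n\rceil$ blocks of $n$ rows supplies the case analysis that the paper defers to Bringmann et al. It also gives the slightly sharper count of $O(|w|\,n^{\omega-1} + |\Sigma|\,n^{\omega})$ operations.
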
 

\begin{proof}[Proof (sketch)]

Let $v^T$ denote the transpose of the vector $v$.  
Let us define configurations in $A$ to be $n$-dimensional column vectors over $\mathbb{K}$. 
Define the \emph{configuration of $A$ on reading $w$} to be the column vector $\mathsf{conf}(w) = (\lambda \cdot \Delta_{w_1}\cdot \ldots \cdot \Delta_{w_k})^T$, where $w = w_1 \ldots w_k$, $w_i \in \Sigma$. The components of such a configuration represent the weight accumulated at each state on reading $w$. 

    The certificate used in~\cite{Bringmann24} is a set of $n$-dimensional bit-vectors, the $i$th vector representing the set of states reachable in $A$ after having read the $i$ length prefix of $w$. For us, these are the configurations of $A$ for each prefix of $w$. 

To verify that the certificate is the set $\{ \mathsf{conf}(u)\mid u \text{ prefix of } w\}$, we check that each $v_{i+1}$ is the product $v_i^T \cdot \Delta_{w_i}$. Instead of verifying these vector-matrix multiplications one by one, we group them according to the letter being read, and verify them together using one matrix-matrix multiplication. Lastly, we ensure that the transpose of the first vector is $\lambda$, and verify that $v_{|w|}\cdot \gamma = A(w)$. To achieve the running time bound of $O(|\Sigma| \, (|w|\,  n^{\omega-1} + n^{\omega}) \cdot  \mathrm{op}(\mathbb{K}))$ some further case analysis is required, for further details see \cite{Bringmann24}. 
\end{proof}

This approach specialises naturally to certifying whether an XNFA accepts a given~word~$w$.

\begin{theorem}
\label{thm:emp}
        Given a weighted automaton $A$ over a field $\mathbb{F}$ with $n$ states over an alphabet $\Sigma$, we can certify in $O(|\Sigma| \, n^{\omega} \cdot \mathrm{op}(\mathbb{F}))$ time that $A(w) = 0$ for all words $w \in \Sigma^*$, where $\omega$ is the matrix multiplication exponent for $\mathbb{F}$. 
\end{theorem}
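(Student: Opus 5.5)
The natural certificate is an invariant subspace. The weighted automaton $A$ is zero on all words iff the forward reachability space $V=\operatorname{span}\{\lambda\Delta_u : u\in\Sigma^*\}\subseteq\mathbb{F}^{1\times n}$ is contained in $\ker\gamma$, i.e.\ $v\gamma=0$ for all $v\in V$. More generally, $A(w)=0$ for all $w$ iff there exists a subspace $U\subseteq\mathbb{F}^{1\times n}$ with three properties: $\lambda\in U$, $U\Delta_\sigma\subseteq U$ for every $\sigma\in\Sigma$, and $U\gamma=0$. For soundness, such a $U$ contains $\lambda\Delta_u$ for every $u$ by induction on $|u|$, so $A(u)=\lambda\Delta_u\gamma=0$. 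For completeness, take $U=V$.

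The certificate will therefore be a basis of such a $U$, given as a matrix $B\in\mathbb{F}^{d\times n}$ with $d\le n$ rows, together with witness matrices $C_\sigma\in\mathbb{F}^{d\times d}$ for each $\sigma\in\Sigma$ and a coefficient vector $c\in\mathbb{F}^{1\times d}$. The verifier checks the following:
\begin{itemize}
\item $\lambda=cB$, which costs $O(n^2)$;
\item $B\Delta_\sigma=C_\sigma B$ for each $\sigma$, which costs two products of at most $n\times n$ matrices, i.e.\ $O(n^\omega)$ field operations per letter;
\item $B\gamma=0$, which costs $O(n^2)$.
\end{itemize}
Linear independence of the rows of $B$ is not needed for soundness: the row space of $B$ contains $\lambda$, is closed under each $\Delta_\sigma$ (every row of $B\Delta_\sigma$ is a combination of rows of $B$), and is annihilated by $\gamma$. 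The only reason to bound $d\le n$ is to control the certificate size. The total verification time is $O(|\Sigma|\,n^\omega\cdot\mathrm{op}(\mathbb{F}))$, and the certificate has size $O(|\Sigma|\,n^2)$ field elements, which is within the time bound.

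The one point needing care is that the witnesses $C_\sigma$ exist for an honest prover. If $B$ is a basis of $V$, then each row of $B\Delta_\sigma$ lies in $V$, so it has a unique expression in the basis, and $C_\sigma$ collects these coefficients. Such a basis of $V$ can be computed, for instance by the standard worklist algorithm, but the prover's running time is irrelevant here. A second point is the choice between asking for $C_\sigma$ and having the verifier compute the coefficients itself. Solving for the coefficients inside the verifier would require a rank or elimination computation, which can also be done in $O(n^\omega)$. Supplying $C_\sigma$ makes the check a pure matrix-product identity, so the verifier needs only matrix multiplication, and I will do that.
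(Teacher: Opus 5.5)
Your proposal is correct and follows essentially the same route as the paper: the certificate is a spanning matrix of an invariant subspace (the forward space) together with coefficient matrices $M_\sigma$. The verifier checks that $\lambda$ lies in the subspace, that the subspace is closed under each $\Delta_\sigma$ via one matrix-product identity per letter, and that it is annihilated by $\gamma$. The only cosmetic difference is that the paper requires $\lambda^{T}$ to be the first column of $U$, padded with zero columns to size $n\times n$, whereas you allow $\lambda = cB$ for a supplied coefficient vector $c$.
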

    
\begin{proof}
    Given a set $X \subseteq \mathbb{F}^n$ of vectors, let the $\mathbb{F}$-linear span (or span, for short) of $X$ be denoted by $\mathsf{span}(X) = \{v \in \mathbb{F}^n \mid v = \sum_{x \in X} c_x x \text{ where for all } x \text{ in }X\text{, }c_x \in \mathbb{F}\}$.

We define the \emph{set of reachable configurations of $A$} to be $\mathsf{Reach} = \{\mathsf{conf}(w) \mid w \in \Sigma^*\}$.  

    The certificate consists of $(|\Sigma|+1)$ matrices of size $n \times n$, namely a matrix $U$ and matrices $M_{\sigma}$ for all $\sigma \in \Sigma$. The columns of $U$ are vectors $u_1, \ldots , u_n$. Intuitively, we guess $U$ to represent a basis for the linear span of $\mathsf{Reach}$. (The set $\mathsf{span(Reach)}$ is known as the \emph{forward space} of~$A$; see, e.g.,~\cite{Kiefer20}. If it has dimension $k < n$, then $u_{k+1} = u_{k+2}= \ldots = u_n = \vec{0}$.) For each $\sigma \in \Sigma$, the matrix $M_{\sigma}$ is a matrix of coefficients. 

    In order to certify that $A$ evaluates to $0$ on every word, the verifier ensures that $u_1^T$ is $\lambda$, that $U^T \cdot \gamma = \vec{0}$, and that $U^T \cdot \Delta_\sigma = M_{\sigma} \cdot U$ for all $\sigma \in \Sigma$.

    Indeed, if $A$ evaluates to $0$ on every word, then all of $\mathsf{Reach}$ is orthogonal to $\gamma$. In particular, if we choose a basis of $\mathsf{span}(\mathsf{Reach})$ that contains $\lambda^T$, and build  a matrix $U$ whose columns are the elements of this basis, then every column of $U$ is orthogonal to $\gamma$. For all $\sigma$ in $\Sigma$, since $\mathsf{Reach}$ is closed under multiplication by the transition matrix $\Delta_\sigma$, every column of $U$ when multiplied by $\Delta_\sigma$ must be an $\mathbb{F}$-linear combination of the columns of $U$. Hence, there must be a matrix of coefficients $M_{\sigma}$ such that $U^T \cdot \Delta_\sigma = M_{\sigma} \cdot U$. The certificate consisting of $U$ and these $M_{\sigma}$ for each $\sigma$ in $\Sigma$ is accepted by the verifier. 

    Conversely, if the verifier accepts $A$ alongside ($|\Sigma|+1$) matrices $(U, \{M_{\sigma}\}_{\sigma \in \Sigma})$, then since $\lambda^T$ is a column vector of $U$, and the span of the column vectors of $U$ is closed under multiplication by transition matrices, the span of the columns of $U$ contains $\mathsf{span}(\mathsf{Reach})$. But we know that the column vectors of $U$ are all orthogonal to $\gamma$, hence all of $\mathsf{Reach}$ must be orthogonal to $\gamma$. Therefore $A$ must evaluate to $0$ on every word.  

    Overall, this requires at most $O(|\Sigma|)$ many square matrix multiplications, so this verifier runs in $O(|\Sigma| \, n^{\omega} \cdot \mathrm{op}(\mathbb{F}))$ time, where $\omega$ is the matrix multiplication exponent for $\mathbb{F}$. 
\end{proof}

\begin{remark}
\label{rem:randomised-remark}
The computation of matrix multiplication in our verifier
is not fully necessary: it can be replaced by
\emph{verification} of matrix products
(as long as the product is supplied as part of the certificate).
Given matrices $U$, $V$, $W$, a verifier checks if $U \cdot V = W$.
In the 1970s, Freivalds~\cite{Freivalds79} famously gave a simple \emph{randomised}
verifier with running time $O(n^2)$ and one-sided error.
For Theorem~\ref{thm:emp}, this reduces the running time to
$O(|\Sigma| \, n^2)$ at the cost of introducing a small chance
that an invalid certificate is incorrectly accepted.
(This is a Merlin--Arthur~\cite{Babai85} protocol for weighted automaton zero-ness.)

Despite recent developments~\cite{Kunnemann18,BennettGGW24},
we are not aware of a deterministic algorithm for the verification
of matrix multiplication with worst-case below $n^\omega$.
\end{remark}

For our next result, we use a standard upper bound
on the length of
the shortest word with non-zero value by a weighted automaton over a field; see, e.g., \cite{Sch61,Paz,Kiefer20}. 

\begin{lemma}[\cite{Sch61,Paz,Kiefer20}]\label{XNFAshortword}
If a weighted automaton $A$ over a field $\mathbb{F}$ with $n$ states evaluates to a non-zero value for some word, then there is a word with non-zero value with at most $n$ letters.   
\end{lemma}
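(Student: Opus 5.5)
The plan is the standard forward-space (Schützenberger/Paz) argument: track the spans of the configurations reachable on words of length at most $\ell$, and show that this increasing chain of subspaces must stabilise within $n$ steps. Keep the notation of the proof of \cref{thm:emp}: $\mathsf{conf}(u) = (\lambda \cdot \Delta_{u_1}\cdots\Delta_{u_m})^T \in \mathbb{F}^n$, so $A(u) = \mathsf{conf}(u)^T\cdot\gamma$. For each $\ell \ge 0$, set $V_\ell = \mathsf{span}\{\mathsf{conf}(u) \mid |u| \le \ell\}$. This gives an increasing chain $V_0 \subseteq V_1 \subseteq \cdots$ of subspaces of $\mathbb{F}^n$.

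The first key step is the recursion $V_{\ell+1} = V_0 + \sum_{\sigma\in\Sigma} \Delta_\sigma^T V_\ell$. It holds because $\mathsf{conf}(u\sigma) = \Delta_\sigma^T\,\mathsf{conf}(u)$ and each map $\Delta_\sigma^T$ is linear. It follows that once $V_{\ell+1} = V_\ell$, the chain is constant from then on: $V_{\ell+2} = V_0 + \sum_\sigma \Delta_\sigma^T V_{\ell+1} = V_0 + \sum_\sigma \Delta_\sigma^T V_\ell = V_{\ell+1}$, and induction finishes it. Hence $V_\ell = V_{\ell^*} = \mathsf{span}(\mathsf{Reach})$ for all $\ell \ge \ell^*$, where $\ell^*$ is the first index at which the chain stalls.

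Next comes the dimension count. If $\lambda = 0$, the automaton evaluates to $0$ on every word and there is nothing to prove. Otherwise $\dim V_0 = 1$. Each strict inclusion $V_\ell \subsetneq V_{\ell+1}$ before stabilisation raises the dimension by at least one, and the dimension can never exceed $n$. So the number of strict steps is at most $n-1$, which gives $\ell^* \le n-1 \le n$. Therefore $\mathsf{span}(\mathsf{Reach}) = V_{n}$; in fact $V_{n-1}$ already suffices.

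To conclude, suppose $A(w) \ne 0$ for some word $w$. Then $\mathsf{conf}(w) \in \mathsf{span}(\mathsf{Reach}) = V_n$, so $\mathsf{conf}(w) = \sum_{|u|\le n} c_u\,\mathsf{conf}(u)$ for some coefficients $c_u \in \mathbb{F}$. Multiplying both sides by $\gamma$ gives $0 \ne A(w) = \sum_{|u| \le n} c_u A(u)$. A nonzero sum must have a nonzero term, so $A(u) \ne 0$ for some $u$ with $|u| \le n$. The only step that needs care is the stabilisation claim. It relies on $V_{\ell+1}$ depending only on $V_\ell$, which is exactly the recursion above. Everything else is routine linear algebra and uses only that $\mathbb{F}$ is a field, so that dimension is well defined.
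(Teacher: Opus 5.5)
Your proof is correct, and it is the standard forward-space (Schützenberger/Paz) chain-stabilisation argument behind the cited result. The paper gives no proof of its own, but it relies on the same forward space $\mathsf{span}(\mathsf{Reach})$ in the proof of \cref{thm:emp}, and your dimension count even yields the slightly sharper bound of $n-1$ letters.
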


\begin{theorem}\label{thm:nonemp}
        Given a weighted automaton $A$ over a field $\mathbb{F}$ with $n$ states over an alphabet $\Sigma$, we can certify in time $O(|\Sigma| \, n^{\omega} \cdot \mathrm{op}(\mathbb{F}))$  the existence of  $w \in \Sigma^*$ such that $A(w)$ is nonzero, where $\omega$ is the matrix multiplication exponent for $\mathbb{F}$. 
\end{theorem}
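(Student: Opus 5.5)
The plan is to certify non-emptiness by exhibiting a witness word $w$ with $A(w) \neq 0$, and then to verify its value quickly. By \cref{XNFAshortword}, if $A$ is non-zero on some word, then there is such a word $w$ with $|w| \le n$. The certificate therefore consists of $w$ (at most $n$ letters), the claimed non-zero value $c = A(w)$, and the certificate of \cref{obs:cert-acc-rej} for the statement ``the value of $A$ on $w$ is $c$''. In other words, it contains the sequence of configurations $v_0, \dots, v_{|w|}$ with $v_0^T = \lambda$ and $v_{i}^T = v_{i-1}^T \Delta_{w_i}$.

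The verifier proceeds as follows. It first checks that $|w| \le n$ and that $c \neq 0$; any certificate that is too long is rejected immediately. It then runs the verifier of \cref{obs:cert-acc-rej}. For a field this is concrete. For each letter $\sigma$, collect the indices $i$ with $w_i = \sigma$. Stack the corresponding vectors $v_{i-1}^T$ as rows of a matrix $X_\sigma$ with $m_\sigma$ rows, and the vectors $v_i^T$ as rows of $Y_\sigma$. Then check $X_\sigma \Delta_\sigma = Y_\sigma$. Since $\sum_\sigma m_\sigma = |w| \le n$, each $X_\sigma$ can be padded with zero rows to an $n \times n$ matrix. Each check is then a single $n\times n$ matrix product costing $O(n^\omega \cdot \mathrm{op}(\mathbb{F}))$, so all checks together cost $O(|\Sigma|\, n^\omega \cdot \mathrm{op}(\mathbb{F}))$. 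Finally the verifier checks $v_0^T=\lambda$ and $v_{|w|}^T \gamma = c$, which takes $O(n)$ operations. This matches the bound $O(|\Sigma|(|w| n^{\omega-1} + n^\omega))$ of \cref{obs:cert-acc-rej} with $|w| \le n$.

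Correctness in both directions is direct. If the verifier accepts, then by induction $v_i^T = \lambda \Delta_{w_1}\cdots\Delta_{w_i}$ for every $i$, so $A(w) = c \neq 0$ and $A$ is non-empty. Conversely, if $A$ is non-empty, \cref{XNFAshortword} provides a short witness $w$, and its true configurations form a valid certificate of size $O(n^2)$ field elements, which is within the time bound.

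The main obstacle is the length of the witness. Without \cref{XNFAshortword}, $|w|$ could be large, and both the certificate size and the $|w|\,n^{\omega-1}$ term would exceed the target bound. This is exactly why the shortest-witness lemma is invoked: it guarantees $|w| \le n$, which collapses the verification cost to $O(|\Sigma|\, n^\omega \cdot \mathrm{op}(\mathbb{F}))$.
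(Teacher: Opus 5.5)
Your proposal is correct and follows the paper's proof. Like the paper, you use \cref{XNFAshortword} to bound the witness length by $n$, take $w$ together with its configuration vectors as the certificate, and run the verifier of \cref{obs:cert-acc-rej}, whose cost becomes $O(|\Sigma|\,n^{\omega}\cdot\mathrm{op}(\mathbb{F}))$ once $|w|\le n$; your letter-wise grouping of the checks, zero-padded to $n\times n$ products, simply spells out the step the paper leaves to the cited source.
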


\begin{proof}[Proof of \cref{thm:nonemp}]
By \cref{XNFAshortword}, if $A$ evaluates to a non-zero value on some word, there is such a word~$w$ with at most $n$ letters. The certificate consists of $w$ and the reachability configuration vectors for the run of $A$ on $w$. After guessing $w$ and its $n$ reachability configuration vectors, we simply run the verifier for weighted automata (\cref{obs:cert-acc-rej}). 
\end{proof}

This allows us to certify the equivalence of weighted automata over fields, by constructing the weighted automaton for the difference. 

    \begin{corollary}\label{cor:equiv}
                Given two weighted automata $A$ and $B$ over a field $\mathbb{F}$ with $n$ states over an alphabet $\Sigma$, we can certify in time $O(|\Sigma| \, n^{\omega} \cdot \mathrm{op}(\mathbb{F}))$ that $A(w) = B(w)$ for all $w \in \Sigma^*$, where $\omega$ is the matrix multiplication exponent for $\mathbb{F}$.
    \end{corollary}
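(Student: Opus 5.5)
The plan is to reduce equivalence to zero-ness and then invoke \cref{thm:emp}. From $A=(Q_A,\Sigma,\Delta^A,\lambda^A,\gamma^A)$ and $B=(Q_B,\Sigma,\Delta^B,\lambda^B,\gamma^B)$, the verifier first builds the difference automaton $C$ on the disjoint union $Q_A\uplus Q_B$. Its transition matrices are block diagonal, $\Delta^C_\sigma=\mathrm{diag}(\Delta^A_\sigma,\Delta^B_\sigma)$. Its initial vector is the concatenation $\lambda^C=(\lambda^A,\lambda^B)$, and its final vector is $\gamma^C=(\gamma^A;\,-\gamma^B)$, where the minus sign is the field negation (trivial over $\mathrm{GF}(2)$).

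The key identity to establish is that for every word $w=a_1\cdots a_l$, the block-diagonal structure gives
\[\lambda^C\Delta^C_{a_1}\cdots\Delta^C_{a_l}\gamma^C=\lambda^A\Delta^A_{a_1}\cdots\Delta^A_{a_l}\gamma^A-\lambda^B\Delta^B_{a_1}\cdots\Delta^B_{a_l}\gamma^B,\]
that is, $C(w)=A(w)-B(w)$. This follows by a routine induction on $l$: the product of block-diagonal matrices is block diagonal with the products in the blocks. Consequently $A(w)=B(w)$ for all $w$ if and only if $C(w)=0$ for all $w$.

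The verifier then uses the same certificate and checks as in \cref{thm:emp}, applied to $C$, which has $2n$ states (or $|Q_A|+|Q_B|\le 2n$). Building $C$ explicitly takes $O(|\Sigma|\,n^2)$ field operations, which is dominated by $O(|\Sigma|\,n^\omega)$. The verification of \cref{thm:emp} on $C$ costs $O(|\Sigma|(2n)^\omega\,\mathrm{op}(\mathbb{F}))=O(|\Sigma|\,n^\omega\,\mathrm{op}(\mathbb{F}))$, since $2^\omega$ is a constant. Completeness and soundness are inherited directly from \cref{thm:emp} through the equivalence above.

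There is no real obstacle here; the only points needing care are the sign in $\gamma^C$ over a general field and noting that doubling the dimension changes the bound by only a constant factor. The same construction, combined with \cref{thm:nonemp} applied to $C$, would also certify non-equivalence within the same bound.
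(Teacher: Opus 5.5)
Your proposal is correct and follows the paper's proof: place $A$ and $B$ side by side so that the combined automaton computes $A(w)-B(w)$, then apply \cref{thm:emp} to it. You are somewhat more explicit than the paper: you spell out the sign $-\gamma^B$ in the final vector over a general field, and you note that doubling the number of states only costs a constant factor $2^\omega$.
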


All of these certification schemes specialise to the case of XNFA emptiness (or equivalence). Using the above approach, we can even certify whether a given XNFA accepts every word, i.e., (non)universality, by checking equivalence with an XNFA that accepts every word. 

These subcubic certificate schemes have a consequence in fine-grained complexity. As a direct consequence of \cref{thm:emp} and \cref{thm:nonemp} (and \cref{cor:equiv}) and Theorem~5.1 by Carmosino et al.~\cite{Carmosino}, we have:

\begin{corollary}[Informal version]
    Suppose there is a deterministic Turing reduction from SAT to weighted automaton emptiness (or equivalence) over a field $\mathbb{F}$ such that, for every $\varepsilon > 0$, every $O((|\Sigma|n^{3}\cdot\mathrm{op}(\bF))^{1 - \varepsilon})$ time algorithm for weighted automaton emptiness (or equivalence) when composed with this reduction yields an $\O{2^{n (1-\delta)}}$ time algorithm for SAT for some $\delta >0$. Then NSETH is false.

\end{corollary}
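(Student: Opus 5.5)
The plan is to combine the two certification schemes into a nondeterministic and a co-nondeterministic algorithm for the same problem, and then invoke Theorem~5.1 of Carmosino et al.~\cite{Carmosino}. That theorem says the following. Suppose a problem $L$ lies in $\mathrm{N}\,T(n)\cap\mathrm{coN}\,T(n)$, and suppose a deterministic fine-grained reduction from SAT to $L$ makes every $T'(n)^{1-\varepsilon}$-time algorithm for $L$ yield a $2^{(1-\delta)n}$-time algorithm for SAT. If $T$ is polynomially smaller than $T'$, then NSETH fails. So the work consists of placing weighted-automaton zero-ness (and equivalence) in nondeterministic and co-nondeterministic time $O(|\Sigma|\,n^{\omega}\cdot\mathrm{op}(\mathbb F))$, and checking that this is polynomially below the cubic bound $|\Sigma|\,n^{3}\cdot\mathrm{op}(\mathbb F)$.

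\textbf{Step 1.} For emptiness, \cref{thm:emp} gives a verifier for the ``yes'' instances, i.e.\ $A(w)=0$ for all $w$, running in time $O(|\Sigma|\,n^\omega\,\mathrm{op}(\mathbb F))$. The certificate is $U$ together with the $M_\sigma$, which has size $O(|\Sigma| n^2)$, within the time bound. \cref{thm:nonemp} gives a verifier for the ``no'' instances in the same time. That certificate is a word of length at most $n$ by \cref{XNFAshortword}, plus at most $n+1$ configuration vectors. Plugging $|w|\le n$ into \cref{obs:cert-acc-rej} gives $O(|\Sigma|(n\cdot n^{\omega-1}+n^\omega))=O(|\Sigma|n^\omega)$ operations. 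Hence emptiness lies in $\mathrm{N}T\cap \mathrm{coN}T$ with $T=O(|\Sigma|n^\omega\mathrm{op}(\mathbb F))$.

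\textbf{Step 2.} For equivalence, build the difference automaton as the disjoint union of $A$ and $B$ with initial vector $(\lambda_A,\lambda_B)$ and final vector $(\gamma_A,-\gamma_B)$. It has $2n$ states, is constructible in $O(|\Sigma| n^2)$ time, and has value $A(w)-B(w)$. Equivalence of $A$ and $B$ is then emptiness of this automaton, so the bounds from Step~1 transfer with only a constant-factor loss.

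\textbf{Step 3, the main obstacle.} We need to check that the hypotheses of Carmosino et al.'s theorem match the informal statement. First, the gap must be polynomial, i.e.\ $n^\omega\le (n^3)^{1-\varepsilon'}$ for some $\varepsilon'>0$. This holds precisely when $\omega<3$, which is true since $\omega\le 2.38$. Second, the measure of instance size must be handled consistently: I would measure size as $N=|\Sigma|n^2$ (the input length) and express both $T$ and $T'$ as powers of $N$. For $|\Sigma|$ small relative to $n$ this gives $T\approx N^{\omega/2}$ and $T'\approx N^{3/2}$, so that $T\le T'^{1-\varepsilon'}$. Finally, the $\mathrm{op}(\mathbb F)$ factor must be taken to be the same on both sides. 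With these checks in place, the claimed reduction from SAT combined with the nondeterministic and co-nondeterministic verifiers yields a co-nondeterministic $2^{(1-\delta')n}$-time algorithm for SAT, i.e.\ a nondeterministic $2^{(1-\delta')n}$-time algorithm for TAUT, which contradicts NSETH.
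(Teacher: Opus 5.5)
Your proposal is correct and follows the paper's route: the paper also derives the corollary directly by placing zero-ness (and, via the difference automaton, equivalence) in nondeterministic and co-nondeterministic time $O(|\Sigma|\,n^{\omega}\cdot\mathrm{op}(\mathbb{F}))$ using \cref{thm:emp} and \cref{thm:nonemp}, and then invoking Theorem~5.1 of Carmosino et al.\ with the polynomial gap coming from $\omega<3$. Your Step~3 is more careful than the paper. One caveat, which the paper's informal statement also leaves implicit: the comparison $|\Sigma|\,n^{\omega}\le(|\Sigma|\,n^{3})^{1-\varepsilon'}$ holds with a uniform $\varepsilon'$ only when $|\Sigma|$ is polynomially bounded in $n$.
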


The formal version of this statement can be found in the appendix. 
NSETH being false would mean there is a nondeterministic algorithm
deciding propositional tautologies
in time $\O{2^{n (1-\delta)}}$ for some $\delta > 0$,
where $n$ is the number of variables.
 
\section{Open problems}
\label{sec:discussion}
We leave it open
whether our randomised reduction can be made deterministic
(error-less) or many-one, instead of making $k$~oracle calls to XNFA Acceptance.
Finding a fine-grained reduction in the other direction is also of
interest. We do not know of problems that are provably hard for XNFA, but not for NFA.
Minimal XNFA and minimal \DFAs are linked:
if we determinise a minimal \XNFA, then we obtain the
corresponding minimal \DFA~\cite{vdMerwe12}.

Our work raises the question whether parity acceptance can circumvent
existing complexity obstacles.
Is it possible to use our reduction to refute the NFA Acceptance
hypothesis~\cite{Bringmann24}, or
is \XNFA acceptance equally hard in a fine-grained sense?
In the case $|\Sigma| = 1$, both NFA and XNFA acceptance
can be solved in time $O(|Q|^\omega \log|w|)$, see~\cite{PS}.

Without the assumption of polynomial ambiguity, NFA Acceptance still
reduces to the non-zeroness of a polynomial, but the
variables must no longer be commuting.
The obstacle to overcome (or avoid) here is the high running time
(on this polynomial) of state-of-the-art methods
for non-commutative polynomial identity testing.

\appendix
\section{Reduction from NFA Acceptance to XNFA Acceptance}

Here we detail the construction of a $k$-XNFA that simulates the evaluation of a weighted automaton over $\mathrm{GF(2^k)}$. 

\GFsimulation*

\begin{definition}[Coordinate simulation]
\label{def:coordinate-construction}
    Given a weighted automaton $A_{\GF{}}$ over $\mathrm{GF}(2^k)$, we construct a $k$-XNFA $B$ in the following way:
\begin{enumerate}
    \item \emph{Representation.} Find an irreducible polynomial $f(y)\in\mathrm{GF}(2)[y]$ of degree $k$, so that the fields $\mathrm{GF}(2^k)$ and $ \mathrm{GF}(2)[y]/\langle f\rangle$ are isomorphic, and use the basis $1, y, \ldots, y^{k-1}$. 
    Note that $\mathrm{GF}(2)[y]/\langle f\rangle$ and  $\mathrm{GF}(2)^k$ are isomorphic vector spaces, and let $\isomorphism\colon \mathrm{GF}(2^k) \to \mathrm{GF}(2)^k$ be the composition of these isomorphisms.
    \item \emph{Matrix computation.} For each of the (at most) $|\delta|$ transitions of $A_{\GF{}}$, with weight $g\in\mathrm{GF}(2^k)$, compute the $k\times k$ matrix $M_g$ representing $h\mapsto g\cdot h$ in the basis $1, y, \ldots, y^{k-1}$. The $i$th column of $M_g$ is obtained by multiplying $g$ by $y^{i-1}$ to produce a polynomial of degree at most $2k-2$, and then reducing the result modulo $f$. 
    \item \emph{Assembly.} 
    Label each original transition in $A_F$ carrying weight $g$ with the
transpose $M_g^{\mathsf T}$ of the matrix computed in step (2) (so that, for a row vector $\mathrm{toVec}(h)\cdot M_g^{\mathsf T} = \mathrm{toVec}(g\cdot h)$, it is consistent with the row vector acceptance of Definition \ref{def:wa}). Replicate each state $q$ into $k$ field-component states $S_q = \{q_1,\ldots,q_k\}$. 

    The initial vector of $A_{\GF{}}$ is expanded coordinatewise, yielding
$\lambda_B \in \mathrm{GF}(2)^{1\times k\cdot|Q|}$. The final matrix
$\gamma_B \in \mathrm{GF}(2)^{k|Q|\times k}$ is defined blockwise: since the final
weights are $0$ or $1$, the $q$th $k\times k$ block of $\gamma_B$ is
$I_k$ if $\gamma_q=1$, and $0_{k\times k}$ otherwise.
\end{enumerate}
This produces a $k$-\XNFA with $k\cdot |Q|$ states; since each $M_r$ contributes at most $k^2$ non-zero entries, the total number of transitions is at most $k^2\cdot |\delta|$.
\end{definition}

\begin{lemma}[Coordinate simulation]\label{lem:simulate}
Let $\GF{} = \mathrm{GF}(2^k)$.  
Given a weighted automaton $A_{\GF{}}$ over $\GF{}$ with $|Q|$ states and $|\delta|$ transitions, and 
with all entries in the initial and final vectors being 0 or 1, \cref{def:coordinate-construction} produces a $k$-\XNFA $B$
such that, for every $w$
, the row vector

$B(w)$ equals the coordinate vector of $A_{\GF{}}(w)$ with respect to a chosen basis of $\GF{}$. 
\end{lemma}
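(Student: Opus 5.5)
We prove this by an invariant on prefixes of $w$. It ties the configuration of $B$ to that of $A_{\GF{}}$, state by state. For a prefix $u$ of $w$, let $\alpha(u) = \lambda \cdot \Delta_{u_1}\cdots\Delta_{u_j} \in \GF{}^{1\times Q}$ be the configuration of $A_{\GF{}}$. Let $\beta(u) = \lambda_B \cdot \Delta^B_{u_1}\cdots \Delta^B_{u_j} \in \mathrm{GF}(2)^{1\times kQ}$ be the configuration of $B$, and write $\beta(u)_{S_q}$ for its restriction to the block $S_q$. The invariant is
\[
\beta(u)_{S_q} = \isomorphism\bigl(\alpha(u)_q\bigr) \quad\text{for all } q\in Q.
\]

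\textbf{Base case} ($u$ empty). By construction, the $S_q$ block of $\lambda_B$ is $\isomorphism(1)$ when $\lambda_q = 1$ and $\vec 0 = \isomorphism(0)$ otherwise. The invariant holds because $\lambda$ has $0/1$ entries.

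\textbf{Inductive step.} Suppose the invariant holds for $u$ and read a letter $\sigma$. Then
\[
\beta(u\sigma)_{S_q} = \sum_{p} \beta(u)_{S_p}\, (\Delta^B_\sigma)_{S_p\times S_q}.
\]
The block $(\Delta^B_\sigma)_{S_p\times S_q}$ is $M_g^{\mathsf T}$ when $A_{\GF{}}$ has a transition $p\xrightarrow{\sigma}q$ of weight $g$, and $0$ otherwise. The key facts are the following.
\begin{itemize}
\item $\isomorphism$ is additive, because it is a $\mathrm{GF}(2)$-vector space isomorphism.
\item $\isomorphism(h)\, M_g^{\mathsf T} = \isomorphism(g h)$. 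Indeed, $M_g$ is the matrix of $h\mapsto gh$ acting on column coordinate vectors, as ensured by step (2) of \cref{def:coordinate-construction}, and transposing turns this into the row-vector action.
\end{itemize}
Applying both facts,
\[
\beta(u\sigma)_{S_q} = \isomorphism\Bigl(\sum_p \alpha(u)_p\, (\Delta_\sigma)_{p,q}\Bigr) = \isomorphism\bigl(\alpha(u\sigma)_q\bigr).
\]

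\textbf{Final step.} The final matrix $\gamma_B$ has block $I_k$ at each $q$ with $\gamma_q=1$ and zero blocks elsewhere. Hence
\[
B(w) = \beta(w)\gamma_B = \sum_{q:\gamma_q=1} \isomorphism\bigl(\alpha(w)_q\bigr) = \isomorphism\Bigl(\sum_q \alpha(w)_q\,\gamma_q\Bigr) = \isomorphism\bigl(A_{\GF{}}(w)\bigr),
\]
using additivity once more and the fact that $\gamma$ has $0/1$ entries.

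The only step needing care is the transpose convention in the inductive step. Column $i$ of $M_g$ is $\isomorphism(g y^{i-1})$, so $M_g\,\isomorphism(h)^{\mathsf T} = \isomorphism(gh)^{\mathsf T}$ by linearity in $h$, written out in the basis $1,y,\dots,y^{k-1}$. Transposing both sides gives the row identity used above. This is also the only place where the choice of basis and of the irreducible $f$ matters. Finally, the case of several transitions between the same pair $(p,q)$ on letter $\sigma$ needs handling. Either the weight matrix $\Delta_\sigma$ already merges them, or the blocks are summed, and summed blocks correspond to $M_{g+g'} = M_g + M_{g'}$, so the argument is unaffected.
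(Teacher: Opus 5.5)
Your proof is correct and takes the same route as the paper. Both argue by induction on prefixes with the blockwise invariant $\beta(u)_{S_q} = \mathrm{toVec}(\alpha(u)_q)$, use the row-vector identity $\mathrm{toVec}(h)\,M_g^{\mathsf T} = \mathrm{toVec}(gh)$ for the step, and finish by summing the final blocks through $\gamma_B$; your version just spells out the base case, the additivity of $\mathrm{toVec}$, and the transpose convention, which the paper leaves implicit.
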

In particular, $B$ accepts $w$ (under the generalised parity acceptance condition) if and only if $A_{\GF{}}(w) \ne 0$.
\begin{proof}
Fix a basis $b_1,\dots,b_k$ of $F$ over $\mathrm{GF}(2)$. The proof is by induction on the length of the prefix of the input word $w = a_1 a_2 \cdots a_\ell$. The invariant is that $\mathrm{toVec}\big((\lambda_A \prod_{j\le i}\Delta_A(a_j))[q]\big) = (\lambda_B \prod_{j\le i}\Delta_B(a_j))[S_q]$, where the $[S_q]$ notation selects only the entries corresponding to states $S_q$.

For each $r \in \GF{}$ the map $g \mapsto r\cdot g$ is $\mathrm{GF}(2)$-linear and is represented on column coordinate vectors by a $k\times k$ matrix $M_r$; since $B$ propagates a row vector, the corresponding block is $M_r^{\mathsf T}$, which gives
$\mathrm{toVec}(c_p)\cdot M_r^{\mathsf T} = \mathrm{toVec}(r\cdot c_p)$ and so preserves the invariant. The final multiplication by $\gamma_B$ then sums exactly the coordinate vectors of the field values accumulated at the final states of $A_F$.

The evaluation of $B$ on $w$ therefore returns exactly the coordinate vector of $A_F(w)$ in the basis $b_1,\dots,b_k$. This vector is non-zero in $\mathrm{GF}(2)^k$ if and only if $A_F(w) \ne 0$ in $F$, which is the acceptance condition of $B$.

\end{proof}

\begin{lemma}[Efficiency of the simulation]\label{lem:simulate-time}
Given $A_{\GF{}}$ as in \cref{lem:simulate}, the $k$-\XNFA $B$ of that lemma can be constructed in time $O(|\delta|\cdot\mathrm{poly}(k))$.
\end{lemma}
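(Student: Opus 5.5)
We account for the cost of each of the three steps of \cref{def:coordinate-construction} in turn, treating $k$ as a parameter and charging all arithmetic in $\mathrm{GF}(2)[y]$ at $\mathrm{poly}(k)$ bit operations.

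\emph{Step (1): representation.} We must produce an irreducible polynomial $f$ of degree $k$ over $\mathrm{GF}(2)$. This is the main obstacle, because the bound must be deterministic and may not depend on $|\delta|$ beyond the additive/multiplicative form claimed. I would first use a deterministic algorithm for constructing an irreducible polynomial of degree $k$ over a prime field; Shoup's algorithm runs in time polynomial in $k$ and the characteristic, which here is $2$.

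If one wants to avoid citing such a result, a fallback is to enumerate all $2^k$ monic polynomials of degree $k$ and test irreducibility of each by Rabin's test or by Berlekamp's test in $\mathrm{poly}(k)$ time. This is only polynomial in $k$ when $2^k$ is polynomially bounded. In the intended application $k=\Theta(\log|w|)$, so $2^k=O(|w|)$ and this additive cost is harmless. Still, for the lemma as stated I would rely on the deterministic polynomial-time construction.

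Either way, this step is independent of $|\delta|$ and costs $\mathrm{poly}(k)$. We also precompute the reductions $y^{k},\dots,y^{2k-2} \bmod f$, which costs $O(k^2)$ word operations.

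\emph{Step (2): matrix computation.} For each of the at most $|\delta|$ transitions with weight $g$, we compute the $k$ columns of $M_g$, namely $\mathrm{toVec}(g\cdot y^{i-1} \bmod f)$ for $i=1,\dots,k$. Column $i+1$ is obtained from column $i$ by multiplying by $y$ (a shift) and, if the degree reaches $k$, adding $f$. This takes $O(k)$ per column and hence $O(k^2)$ per transition. Multiplying $g$ by each $y^{i-1}$ from scratch followed by reduction would be $O(k^2)$ per column; this is also fine, at $\mathrm{poly}(k)$ per transition. Transposing $M_g$ costs $O(k^2)$. The total for this step is $O(|\delta|\cdot k^2)$.

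\emph{Step (3): assembly.} We create the $k|Q|$ states. For each transition $t=(p,\sigma,q)$, we write the nonzero entries of $M_{r_t}^{\mathsf T}$ into the adjacency lists from $S_p$ to $S_q$ under $\sigma$, which is at most $k^2$ edges per transition. The vectors $\lambda_B$ and $\gamma_B$ need $O(k|Q|)$ entries, with a $k\times k$ identity block placed per final state. Since we may assume $A$ is trim, or at least that $|Q|=O(|\delta|+|I|+|F|)$ after discarding isolated states, this step costs $O(|\delta|\,k^2 + k|Q|)$. Summing the three steps gives $O(|\delta|\cdot\mathrm{poly}(k))$, and we note the harmless $O(k|Q|)$ term for states not touched by any transition.
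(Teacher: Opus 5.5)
Your proposal is correct and follows essentially the same route as the paper: Shoup's deterministic construction gives the irreducible $f$ in $\mathrm{poly}(k)$ time, each of the at most $|\delta|$ matrices $M_g$ costs $\mathrm{poly}(k)$ (your shift-and-add gives an explicit $O(k^2)$), and assembly is linear in the output size. Your explicit additive $O(k|Q|)$ term for creating the $k|Q|$ states and the vectors $\lambda_B,\gamma_B$ is slightly more careful than the paper, which silently absorbs it into $O(|\delta|\, k^2)$.
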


\begin{proof}
We refer to \Cref{def:coordinate-construction} for the three steps.

Step~(1) contributes $\mathrm{poly}(k)$.
Indeed, note that we can find an $k$-th degree irreducible polynomial in time
polynomial in $k$ (see, e.g.,~\cite{Shoup}).

Step~(2) contributes $O(|\delta|\cdot \mathrm{poly}(k))$ (at most $|\delta|$ matrices, each computable in $\mathrm{poly}(k)$ time).
In this computation, reducing modulo $f$ a polynomial of degree at most $2 k - 2$ takes time polynomial in $k$.

Step~(3) has a running time that is linear in the output size, $O(|\delta|\cdot k^2)$. The total is $O(|\delta|\cdot \mathrm{poly}(k))$.
\end{proof}

Next, we decompose the $k$-XNFA we created into $k$ XNFA. 
\XNFAdecomposition*

\begin{proof}
By definition, $A$ accepts $w = a_1 \cdots a_l$ iff the row vector
$v := \lambda \cdot \Delta_{a_1} \cdots \Delta_{a_l} \cdot \gamma \in \mathrm{GF}(2)^{1\times k}$
is non-zero, i.e., iff at least one of its $k$ coordinates is non-zero. The $i$-th coordinate of $v$ is
$\lambda \cdot \Delta_{a_1} \cdot\ldots\cdot \Delta_{a_l} \cdot \gamma_i$,
which is exactly the acceptance expression of $A_i$. Hence, $A$ accepts $w$ iff at least one $A_i$ does. The equality of state sets and transition matrices is immediate from the definition of $A_i$.
\end{proof}

\begin{remark}\label{rem:decompose_determinise}
The decomposition of \cref{lem:decompose} does \emph{not} incur a significant complexity blow-up: all automata $A_i$ share the same state set and transition structure and differ only in their accepting states. In fact, if we determinise a $k$-\XNFA, we obtain the same DFA as when we determinise any one of the $A_i$'s, but a state in the DFA is declared final if and only if it is final in at least one of the determinised $A_i$'s. Thus, when determinising a $k$-\XNFA with $n^\prime$ states, we obtain a DFA with at most $2^{n'}$ states.
\end{remark}

As a direct consequence of \cref{thm:main}, we have the following: 

\begin{corollary}
With the assumptions as in \cref{thm:main}, let $\eta\in(0,1)$ and suppose
$k \ge \left\lceil \log_2\left(\nicefrac{|w|}{\eta}\right)\right\rceil$.
If $A$ accepts $w$, then at least one of the constructed XNFA accepts
$w$ with probability at least $1-\eta$.
\end{corollary}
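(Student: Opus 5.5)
The plan is to invoke \cref{thm:main} directly with the given $k$ and show that its error bound is at most $\eta$. By part~(2) of \cref{thm:main}, if $A$ accepts $w$ then at least one of the constructed XNFA $A_1,\dots,A_k$ accepts $w$ with probability at least $1-\nicefrac{|w|}{2^k}$. It therefore suffices to check two things: that the hypothesis $k > \log_2|w|$ of \cref{thm:main} holds, and that $\nicefrac{|w|}{2^k}\le\eta$.

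For the second, monotonicity of $2^x$ gives $2^k \ge 2^{\log_2(|w|/\eta)} = \nicefrac{|w|}{\eta}$, hence $\nicefrac{|w|}{2^k}\le \eta$. For the first, since $\eta<1$ we have $\log_2(\nicefrac{|w|}{\eta}) > \log_2|w|$, so $k \ge \lceil\log_2(\nicefrac{|w|}{\eta})\rceil \ge \log_2(\nicefrac{|w|}{\eta}) > \log_2|w|$. This is also exactly the condition $|\GF{}| = 2^k > |w|$ needed for \cref{lem:NFA_SZ}. Here $w\in\Sigma^+$ is assumed as in \cref{thm:main}, so $|w|\ge 1$ and the logarithms are defined.

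Combining the two facts, the success probability is at least $1-\nicefrac{|w|}{2^k}\ge 1-\eta$. The size and running-time guarantees of \cref{thm:main} are unchanged, with $k=\Theta(\log(|w|/\eta))$. No step is a real obstacle; the only point needing care is the strict inequality $k>\log_2|w|$, and it follows from $\eta<1$ as shown above.
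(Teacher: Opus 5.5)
Your proposal is correct and follows the paper's argument: the paper treats this as a direct consequence of part~(2) of Theorem~\ref{thm:main}, using $2^k \ge |w|/\eta$ to get $|w|/2^k \le \eta$; the same computation appears in the proof of Corollary~\ref{cor:reduction-general}. Your extra check that $k > \log_2|w|$ follows from $\eta<1$ is redundant, since the hypotheses of Theorem~\ref{thm:main} are already assumed, but it is harmless and shows the new condition on $k$ is consistent with them.
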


By the reduction presented in \cref{thm:main}, we can also conclude the following: 

\begin{corollary}\label{cor:reduction-general}
Suppose that, for some real $\varepsilon>0$, there is an algorithm that decides
XNFA Acceptance for every XNFA $C$ with transition relation $\delta^{(C)}$ and every
input word $u$ in time $O\bigl((|\delta^{(C)}|\cdot|u|)^{1-\varepsilon}\bigr)$.
Then, for every $\eta\in(0,1)$, the acceptance problem for a polynomially
ambiguous NFA $A$ with transition relation $\delta$ and input word $w$ admits a randomised algorithm with
one-sided error at most $\eta$ and running time
$O\!\bigl(|\delta|\cdot\mathrm{poly}(k) + |w|\,k
        + k\cdot\bigl((k^{2}\cdot|\delta|)\,|w|\bigr)^{1-\varepsilon}\bigr)$,
where $k=\lceil\log_2(|w|/\eta)\rceil$. In particular, for constant $\eta$ this is
$\widetilde{O}\bigl(|\delta| + |w| + (|\delta|\cdot|w|)^{1-\varepsilon}\bigr)$,
where $\widetilde{O}$ hides polylogarithmic factors in $|w|$.
\end{corollary}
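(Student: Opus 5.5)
The plan is to mirror the proof of Corollary~\ref{cor:reduction-finegrained}, replacing the fixed choice $k=\lceil\log_2|w|\rceil+1$ by $k=\lceil\log_2(|w|/\eta)\rceil$. First dispose of the trivial case $w=\varepsilon$, decided directly from the initial and final vectors in $O(|Q|)$ time. For $w\in\Sigma^+$ we have $|w|\ge 1$ and $\eta<1$, so $|w|/\eta>|w|$ and hence $k\ge\log_2(|w|/\eta)>\log_2|w|$. This meets the hypothesis $k>\log_2|w|$ of Theorem~\ref{thm:main}, and it also gives $|w|/2^k\le\eta$.

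Next apply Theorem~\ref{thm:main} to $A$, $w$ and this $k$. This produces $k$ XNFA $A_1,\dots,A_k$, each with at most $k^2|\delta|$ transitions, in time $O(|\delta|\cdot\mathrm{poly}(k))$. Then run the assumed XNFA Acceptance algorithm on every pair $(A_i,w)$ and accept iff some call accepts. Correctness follows from parts~(1) and~(2) of Theorem~\ref{thm:main}. If $A$ rejects $w$, every $A_i$ rejects with probability $1$, so the error is one-sided. If $A$ accepts $w$, the failure probability is at most $|w|/2^k\le\eta$; this is exactly the preceding corollary with error $\eta$.

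For the running time, sum three contributions:
\begin{itemize}
\item the construction, $O(|\delta|\cdot\mathrm{poly}(k))$;
\item presenting the input word to each of the $k$ calls, $O(|w|\,k)$;
\item the $k$ oracle calls, each costing $O\bigl((k^2|\delta|\cdot|w|)^{1-\varepsilon}\bigr)$ because $|\delta^{(A_i)}|\le k^2|\delta|$ and $|u|=|w|$.
\end{itemize}
Together these give exactly the displayed bound.

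The only step needing care is the ``in particular'' clause. For constant $\eta$ we have $k=\log_2|w|+O(1)=\Theta(\log|w|)$ once $|w|\ge 2$; for $|w|=1$, $k$ is a constant. So $\mathrm{poly}(k)$, $k$, and $k\cdot k^{2(1-\varepsilon)}$ are all polylogarithmic in $|w|$ and are absorbed into $\widetilde O$. This yields $\widetilde O\bigl(|\delta|+|w|+(|\delta|\cdot|w|)^{1-\varepsilon}\bigr)$. I do not expect a real obstacle. The one point to check is that the integrality of $k$ and the strict inequality $k>\log_2|w|$ both hold, and this relies on $\eta<1$.
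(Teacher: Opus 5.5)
Your proposal is correct and follows the paper's proof step for step. You dispose of $w=\varepsilon$, set $k=\lceil\log_2(|w|/\eta)\rceil$ so that $|w|/2^k\le\eta$, invoke Theorem~\ref{thm:main}, and sum the construction cost, the $O(|w|\,k)$ cost of writing the inputs and the $k$ oracle calls, absorbing the polylogarithmic factors for constant $\eta$. Your explicit check that $\eta<1$ guarantees the hypothesis $k>\log_2|w|$ of Theorem~\ref{thm:main} is right, and it is a detail the paper leaves implicit.
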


\begin{proof}
If $w=\varepsilon$, acceptance can be checked directly from the initial and
final states; thus, assume $w\in\Sigma^+$.
Choose
$k=\left\lceil \log_2\left(\nicefrac{|w|}{\eta}\right)\right\rceil$.
Then $2^k\ge \nicefrac{|w|}{\eta}$, and hence $\nicefrac{|w|}{2^k} \le \eta$. By Theorem~\ref{thm:main},
if the original NFA rejects $w$, then all constructed XNFA reject $w$
with probability $1$. If the original NFA accepts $w$, then at least one
of the constructed XNFA accepts $w$ with probability at least
$1-\eta$.

By Theorem~\ref{thm:main}, each constructed XNFA has at most $k^2\cdot |\delta|$ transitions,
and there are $k$ such XNFA. Running the assumed XNFA Acceptance
algorithm on each of them, therefore, takes time
$\O{k\cdot (k^2\cdot|\delta|\cdot|w|)^{1-\varepsilon}}$.
The construction itself takes time $O(|\delta|\cdot\operatorname{poly}(k))$ by
Theorem~\ref{thm:main}. For constant $\eta$, we have $k=O(\log |w|)$; thus, the first term is
$\widetilde{O}(|\delta|)$ and the second term is
$\widetilde{O}((|\delta|\cdot|w|)^{1-\varepsilon})$.
The term $O(|w| k)$ accounts for preparing the $k$ instances of XNFA
Acceptance, namely for the time needed to write the input word $w$.
\end{proof}

\section{Certificates for acceptance and emptiness}

\begin{proof}[Proof of \cref{cor:equiv}]
   Given weighted automata $A$ and $B$, the weighted automaton made by simply placing $A$ and $B$ side by side computes the difference of the values of $A$ and $B$ on each word. To verify that this weighted automaton always evaluates to zero, we use \cref{thm:emp}.
\end{proof}

The subcubic certificate schemes we provide have a consequence in fine-grained complexity. We recall the following two popular fine-grained complexity hypotheses: 

\begin{hypothesis}[SETH = Strong Exponential Time Hypothesis~\cite{ImpagliazzoP99}]
        For every $\varepsilon > 0$,
        there exists a $k$ such that $k$-SAT is not in $\mathsf{DTIME}[2^{n(1-\varepsilon)}]$, where $k$-SAT is the language of all satisfiable Boolean formulas in $k$-CNF.
         \end{hypothesis}

Here and below, $n$ is the number of Boolean variables in the formula.

The following hypothesis strengthens SETH even further.

		 \begin{hypothesis}[NSETH = Non-deterministic Strong Exponential Time Hypothesis~\cite{Carmosino}]
       For every $\varepsilon > 0$,
        there exists a $k$ such that $k$-TAUT is not in $\mathsf{NTIME}[2^{n(1-\varepsilon)}]$, where $k$-TAUT is the language of all tautological Boolean formulas in $k$-DNF.
		     
		 \end{hypothesis}

Theorem 5.1 by Carmosino et al.~\cite{Carmosino} states that if a problem $\Pi$ has a certificate scheme that runs in a time bound $T$, i.e. if  $\Pi \in \mathsf{NTIME}(T) \cap \mathrm{co\text{-}\mathsf{NTIME}(T)}$, then a SETH based lower bound that shows hardness stronger than $T$ would contradict NSETH. The idea is that by composing such a reduction from SAT to $\Pi$ with the certification scheme, one could obtain a faster nondeterministic algorithm for deciding if a Boolean formula is a tautology.

We have a certificate scheme (a nondeterministic and co-nondeterministic algorithm) for XNFA emptiness (and equivalence) that runs in $O(|\Sigma|n^{\omega})$. Hence, we obtain the following corollary as a direct consequence of~\cref{thm:emp} and \cref{thm:nonemp} (and \cref{cor:equiv}) and Theorem 5.1 by Carmosino et al.~\cite{Carmosino}: 

\begin{corollary}
    For every $\gamma>0$, suppose there is a deterministic Turing reduction from SAT to weighted automaton emptiness (or equivalence) over a field $\mathbb{F}$. Let $\omega$ be the matrix multiplication exponent for $\mathbb{F}$ and $\mathrm{op}(\mathbb{F})$ be the time taken to perform field operations in $\mathbb{F}$.  
    
    Moreover, suppose that for every $\varepsilon >0$ there exists a $\delta>0$ such that any $O((|\Sigma|n^{\omega + \gamma} \cdot \mathrm{op}(\mathbb{F}))^{1 - \varepsilon})$ time algorithm for weighted automaton emptiness (or equivalence) when composed with this reduction yields an $O(2^{n (1-\delta)})$ time algorithm for SAT. Then NSETH is false.
\end{corollary}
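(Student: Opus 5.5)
Since emptiness and equivalence are each certified in both directions within time $T:=O(|\Sigma|\,n^{\omega}\cdot\mathrm{op}(\mathbb{F}))$, the plan is to invoke Theorem~5.1 of Carmosino et al.~\cite{Carmosino} directly. First, put the problem $\Pi$ (weighted automaton zero-ness over $\mathbb{F}$, or equivalence) into $\mathsf{NTIME}(T)\cap\mathrm{co}\text{-}\mathsf{NTIME}(T)$. For zero-ness, the certificate $(U,\{M_\sigma\}_{\sigma})$ of \cref{thm:emp} witnesses membership. For non-zero-ness, the short word from \cref{XNFAshortword}, together with its configuration vectors, is checked by \cref{obs:cert-acc-rej} with $|w|\le n$, as in \cref{thm:nonemp}. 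Both verifiers run in time $T$ and use certificates of size at most $T$. For equivalence, I would pass to the difference automaton of \cref{cor:equiv}, which has $O(n)$ states, so the bound is unchanged. Non-equivalence is certified in the same way by a word on which the difference automaton is non-zero.

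Second, compose the assumed Turing reduction with these certificate schemes to decide $k$-TAUT nondeterministically. Fix $\gamma>0$ and choose $\varepsilon>0$ with $(|\Sigma|n^{\omega+\gamma}\mathrm{op}(\mathbb{F}))^{1-\varepsilon}\ge |\Sigma|n^{\omega}\mathrm{op}(\mathbb{F})$ in the relevant regime, for example $\varepsilon$ with $(\omega+\gamma)(1-\varepsilon)\ge\omega$. By hypothesis there is a $\delta>0$ such that any algorithm for $\Pi$ within that time bound, composed with the reduction, decides SAT in time $O(2^{n(1-\delta)})$. Simulate the reduction on the negation of the input DNF, so that the question becomes unsatisfiability of a CNF. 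Each time the reduction issues an oracle query, nondeterministically guess both the answer and a certificate for that answer, then verify it in time $T$, which is within the allowed per-instance budget. Any computation path with a failed verification rejects. Every accepting path therefore answers all queries correctly, so the nondeterministic machine accepts exactly the tautologies. Its running time is that of the composed algorithm, namely $O(2^{n(1-\delta)})$.

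Third, conclude that $k$-TAUT lies in $\mathsf{NTIME}[2^{n(1-\delta)}]$ for every $k$. This contradicts NSETH, which asserts that for $\varepsilon'=\delta$ some $k$-TAUT lies outside that class.

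The main obstacle is the bookkeeping of the time accounting in the second step. The per-query cost guarantee in the hypothesis is stated in terms of the size of each query instance, while the hypothesis's premise speaks of a deterministic algorithm; one must note that Carmosino et al.'s argument only requires the per-query cost bound, so a nondeterministic guess-and-verify oracle with the same bound suffices. It also needs care that the field-operation cost and the $\omega$ for $\mathbb{F}$ carry through, and that certificate sizes stay within $T$. I would handle this by citing Theorem~5.1 of~\cite{Carmosino} verbatim for the composition, rather than reproving it.
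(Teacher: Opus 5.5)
Your proposal is correct and follows the paper's route. The paper likewise places emptiness and equivalence in $\mathsf{NTIME}(T)\cap\mathrm{co}\text{-}\mathsf{NTIME}(T)$ via \cref{thm:emp}, \cref{thm:nonemp} and \cref{cor:equiv}, and then invokes Theorem~5.1 of~\cite{Carmosino}, which is exactly your guess-and-verify composition.

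One small tightening. Your sample choice $(\omega+\gamma)(1-\varepsilon)\ge\omega$ does not absorb the lost factor $(|\Sigma|\,\mathrm{op}(\mathbb{F}))^{\varepsilon}$. Take the inequality strict, and choose $\varepsilon$ small enough relative to a polynomial bound on $|\Sigma|\cdot\mathrm{op}(\mathbb{F})$ in terms of $n$; the paper also leaves this implicit.
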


That is, if we could prove hardness above $\Omega(|\Sigma|n^{\omega} \cdot \mathrm{op}(\mathbb{F}))$ for weighted automaton emptiness (or equivalence) subject to SETH, then NSETH fails. 

Intuitively, NSETH being false would mean there is a proof system
for $n$-variable propositional tautologies
which has complexity $\O{2^{n (1-\delta)}}$ for some fixed $\delta > 0$.
Further consequences of NSETH failure are discussed in~\cite{Carmosino}.
 \end{document}